\newcommand{\Oof}{\mathcal{O}}
\newcommand{\N}{\mathbb{N}}
\renewcommand{\phi}{\varphi}
\theoremstyle{definition}
\newtheorem{claim}{Claim}
\newtheorem{claimproof}{Proof of Claim}
\crefname{observation}{Observation}{Observation}
\crefname{claim}{Claim}{Claim}
\newcommand{\Oh}{\mathcal{O}}
\newcommand{\Cc}{\mathscr{C}}
\newcommand{\Tt}{\mathcal{T}}
\newcommand{\bag}{\mathsf{bag}}
\newcommand{\cone}{\mathsf{cone}}
\newcommand{\cmp}{\mathsf{comp}}
\newcommand{\mrg}{\mathsf{mrg}}
\newcommand{\adh}{\mathsf{adh}}
\newcommand{\parent}{\mathsf{parent}}
\newcommand{\children}{\mathsf{children}}
\renewcommand{\FO}{\ensuremath{\mathsf{FO}}}
\newcommand{\MSO}{\ensuremath{\mathsf{MSO}}}
\newcommand{\CMSO}{\ensuremath{\mathsf{CMSO}}}
\renewcommand{\FPT}{\ensuremath{\mathsf{FPT}}}
\newcommand{\FOMSO}{{\FO(\MSO(\preceq,A)\cup\Sigma)}}
\renewcommand{\DP}{\mathsf{dp}}
\newcommand{\FODP}{\ensuremath{\FO\hspace{0.8pt}\raisebox{-0.3pt}{\texttt{+}}\hspace{0.8pt}\DP}\xspace}
\renewcommand{\leq}{\leqslant}
\renewcommand{\geq}{\geqslant}
\renewcommand{\le}{\leqslant}
\renewcommand{\ge}{\geqslant}
\newtheorem{lemma}{Lemma}[section]
\newtheorem{corollary}[lemma]{Corollary}
\newtheorem{theorem}[lemma]{Theorem}
\newtheorem{proposition}[lemma]{Proposition}
\newtheorem{definition}[lemma]{Definition}
\newtheorem{observation}[lemma]{Observation}
\definecolor{MidnightBlack}{rgb}{0.1,0.1,.34}
\definecolor{MidnightBlue}{rgb}{0.1,0.1,0.44}
\definecolor{Black}{rgb}{0,0, 0}
\definecolor{Blue}{rgb}{0, 0 ,1}
\definecolor{Red}{rgb}{1, 0 ,0}
\definecolor{White}{rgb}{1, 1, 1}
\definecolor{Grey}{rgb}{.6, .6, .6}
\definecolor{Mygreen}{rgb}{.0, .5, .0}
\definecolor{Yellow}{rgb}{.55,.55,0}
\definecolor{Mustard}{rgb}{1.0, 0.86, 0.35}
\definecolor{applegreen}{rgb}{0.55, 0.71, 0.0}
\definecolor{darkturquoise}{rgb}{0.0, 0.81, 0.82}
\definecolor{celestialblue}{rgb}{0.29, 0.59, 0.82}
\definecolor{green_yellow}{rgb}{0.68, 1.0, 0.18}
\definecolor{crimsonglory}{rgb}{0.75, 0.0, 0.2}
\definecolor{darkmagenta}{rgb}{0.30, 0.0, 0.30}
\definecolor{internationalorange}{rgb}{1.0, 0.31, 0.0}
\definecolor{internationalorange}{rgb}{1.0, 165/256, 0.0}
\definecolor{darkorange}{rgb}{1.0, 0.55, 0.0}
\definecolor{ao}{rgb}{0.0, 0.5, 0.0}
\definecolor{awesome}{rgb}{1.0, 0.13, 0.32}
\journal{arXiv}
\begin{document}

\begin{frontmatter}

\title{Model Checking Disjoint-Paths Logic \\ on Topological-Minor-Free Graph Classes\tnoteref{t1}}
\tnotetext[t1]{The results of this article appeared in the Proceedings of the 39th Annual ACM/IEEE Symposium on Logic in Computer Science (LICS 2024)~\cite{SchirrmacherSSTV24}.
This research was supported by the
French-German Collaboration ANR/DFG Project UTMA (ANR-20-CE92-0027, DFG 446200270) {\em ``Unifying Theories for Multivariate Algorithms''}.
The 4th author was also supported by the ANR project GODASse ANR-24-CE48-4377,  and by the Franco-Norwegian project PHC AURORA 2024 (Projet n°\! 51260WL).}

\author{Nicole Schirrmacher\texorpdfstring{$^{\star\star}$}{**}, Sebastian Siebertz\texorpdfstring{$^{\star\star}$}{**}, Giannos Stamoulis\texorpdfstring{$^{\star\star\star}$}{***}\\ Dimitrios M. Thilikos\texorpdfstring{$^{\star\star\star\star}$}{****}, Alexandre Vigny\texorpdfstring{$^{\star\star\star\star\star}$}{*****}}

\tnotetext[t2]{Nicole Schirrmacher, Sebastian Siebertz: University of Bremen, Germany, \texttt{$\{$schirrmacher, siebertz$\}$@uni-bremen.de}}
\tnotetext[t3]{\hspace{-4mm}$^{\star\star\star}$Giannos Stamoulis: Université Paris Cité, CNRS, IRIF, F-75013, Paris, France, \texttt{stamoulis@irif.fr}}
\tnotetext[t4]{\hspace{-4mm}$^{\star\star\star\star}$ Dimitrios M. Thilikos: LIRMM, Univ Montpellier, CNRS, France, \texttt{sedthilk@thilikos.info}}
\tnotetext[t5]{\hspace{-4mm}$^{\star\star\star\star\star}$ Alexandre Vigny: Université Clermont Auvergne, France, \texttt{alexandre.vigny@uca.fr}}

\begin{abstract}
    \noindent
    \emph{Disjoint-paths logic}, denoted $\FODP$, extends first-order logic (\FO) with atomic predicates $\DP_r[(x_1, y_1), \ldots ,(x_r, y_r)]$, expressing the existence of vertex-disjoint paths between $x_i$ and~$y_i$, for  $1\leq i\leq r$. We prove that for every graph class excluding some fixed graph as a topological minor, the model checking problem for $\FODP$ is fixed-parameter tractable. This essentially settles the question of tractable model checking for this logic on subgraph-closed classes, since the problem is hard on subgraph-closed classes not excluding a topological minor (assuming a further mild condition of efficiency of encoding).  
\end{abstract}

\end{frontmatter}

\noindent{\bf Keywords:}  Algorithmic meta-theorems, First-order logic, Separator logic, Disjoint-paths logic, Topological minors, Fixed-parameter tractability.

\section{Introduction}

The model checking problem for a logic $\mathcal{L}$ gets as input a structure and an $\mathcal{L}$-sentence and the question is to decide whether the sentence is true in the structure. Therefore, the model checking problem for $\mathcal{L}$ expresses all $\mathcal{L}$-definable problems. For this reason, tractability results for model checking problems are often called \emph{algorithmic meta theorems}, as they explain and unify tractability for all problems definable in the considered logic~$\mathcal{L}$. A prime example of an algorithmic meta theorem is Courcelle's theorem~\cite{Courcelle90} stating that every problem definable in monadic second-order logic ($\MSO$) can be solved in linear time on every graph class with bounded treewidth. An algorithmic meta theorem not only provides a quick way to establish tractability of problems but in many cases, its proof distills the essence of the algorithmic techniques required to solve them. Courcelle's theorem captures the decomposability of $\MSO$-definable problems and a corresponding dynamic programming approach over tree decompositions of small width. Courcelle's theorem was extended to graph classes with bounded cliquewidth~\cite{courcelle2000linear} and it is known that these are essentially the most general graph classes on which we can expect efficient $\MSO$ model \mbox{checking~\cite{ganian2014lower,kreutzer2010lower}}.

Also the first-order ($\FO$) model checking problem has received considerable attention in the literature, see e.g.~\cite{bonnet2022twin,bonnet2021twin,dawar2007locally,dreier2023first,dreier2024first,dvovrak2010deciding,eickmeyer2017fo,flum2001fixed,frick2001deciding,gajarsky2020first,ganian2013fo,GroheKS17,HlinenyPR17,seese1996linear,SiebertzV24}. Seese~\cite{seese1996linear} was the first to study the $\FO$ model checking problem on classes of graphs with bounded maximum degree.
The essence of his approach was to exploit the locality properties of~$\FO$, and, in some form, locality-based methods constitute the basis of all of the mentioned model checking results for $\FO$. Grohe,  Kreutzer, and Siebertz~\cite{GroheKS17} showed that the $\FO$ model checking problem is fixed-parameter tractable on nowhere dense graph classes and when considering subgraph-closed classes this result is optimal~\cite{dvovrak2010deciding}. 
In a recent breakthrough, it was shown that the problem is fixed-parameter tractable on classes with bounded twinwidth~\cite{bonnet2021twin} and, moreover, this result is optimal when considering classes of ordered structures~\cite{bonnet2022twin}.

While $\FO$ can express many interesting algorithmic properties, it also has some shortcomings. In particular, it cannot count and it can express \textsl{only} local properties. The first shortcoming led to the study of counting extensions, see e.g.~\cite{berkholz2018answering,dreier2021approximate,grohe2018first,kuske2017first,kuske2018gaifman,NesetrilOdMS24,torunczyk2020aggregate}. These meta theorems, in essence,  build again on locality properties that are shared by $\FO$ with counting extensions. The second shortcoming classically led to the study of transitive-closure logics and fixed-point logics, see e.g.~\cite{ebbinghaus1999finite,gradel2007finite,grohe2008logic,libkin2004elements}. However, even the model checking problem for the very restricted monadic transitive-closure logic~$\mathrm{TC}^1$ is most probably not fixed-parameter tractable even on planar graphs of maximum degree three~\cite{grohe2008logic}. Furthermore, these logics still fall short of being able to express many interesting algorithmic problems, involving  ``non-local'' queries, studied in contemporary algorithmics. A classic problem of this kind is the \textsc{Disjoint Paths} problem: \textsl{Given a graph $G$ and a set $\{(s_{1},t_{1}),\ldots,(s_{r},t_{r})\}$ of pairs of terminals, the question is whether $G$ contains vertex-disjoint paths joining $s_{i}$ and~$t_{i}$ for $1\leq i\leq r$.} 
Clearly, asking for a path joining pairs of terminals is not a local query as the size of such a path is unbounded. 

This has recently led to the study of new logics whose expressive power lies between~$\FO$ and~$\MSO$. \emph{Compound logic} combines $\FO$ and $\MSO$ and is designed to express a wide range of graph modification problems~\cite{FominGSST24}.
Its model checking problem is fixed-parameter tractable on classes of graphs with excluded minors and as its core, it combines the locality method for $\FO$ with the important algorithmic irrelevant vertex technique to eventually reduce the problem to a bounded treewidth graph. 
Another recently introduced formalism is  $\textsf{CMSO/tw}$  introduced in~\cite{SauST25}, where  second-order quantification 
is restricted to sets of bounded \textsl{bidimensionality}.\footnote{The \emph{bidimensionality} of a vertex set $X$ of a graph $G$ 
is the maximum $k$ for which $G$ can be contracted to an $(k\times k)$-grid in a way that for each vertex of the resulting grid at least one vertex of $X$ has been identified.}
This allows one to express properties that are not first-order definable while still enabling fixed-parameter tractable model checking on minor-closed graph classes.
Algorithmically, the approach combines logical locality with minor-based reductions that confine the relevant part of the instance to regions of bounded treewidth and it permits model checking in quadratic time on classes excluding some graph as a minor.
In follow-up work, we generalize the approach and study model checking for logics based on general \emph{annotated graph parameters} on topological-minor-free graph classes~\cite{Sau26}.
Yet another recently introduced logic is \emph{separator logic}, which extends $\FO$ by connectivity after vertex deletions~\cite{Bojanczyk21,SchirrmacherSV22} and which can express other interesting algorithmic problems such as elimination distance to \FO-definable graph classes.
It was proven in \cite{PilipczukSSTV22} that, for this logic, the model checking problem is fixed-parameter tractable on classes excluding a topological minor, and for subgraph-closed classes, this result cannot be extended to more general classes (assuming a further condition on the efficiency of encoding\footnote{We say that a class $\Cc$ \emph{admits efficient encoding of topological minors} if for every graph $H$ there exists $G \in \Cc$ such that $H$ is a topological minor of $G$, and, given $H$, such $G$ together with a suitable topological minor model can be computed in time polynomial in $|H|$.}).
This meta theorem essentially combines classical $\FO$ model checking with dynamic programming over decompositions into unbreakable parts. The required decompositions are provided by a result of  Cygan, Lokshtanov, Pilipczuk, Pilipczuk, and Saurabh~\cite{cygan2019minimum}. A key observation is that over highly connected graphs, connectivity can be reduced to queries of bounded length paths, and therefore becomes in fact \FO~expressible. 

In this work, we study \emph{disjoint-paths logic}, which was also introduced in~\cite{SchirrmacherSV22} as an extension of separator logic. Disjoint-paths logic, denoted $\FODP$, extends first-order logic~(\FO) with atomic predicates $\DP_r[(x_1, y_1), \ldots ,(x_r, y_r)]$ expressing the existence of vertex-disjoint paths\footnote{In~\cite{SchirrmacherSV22}, the predicate expresses the existence of \emph{internally vertex-disjoint paths}, however, it is easy to see that the two extensions lead to the same expressive power.} between $x_i$ and $y_i$, for  $1\leq i\leq r$. It can express many interesting algorithmic problems, such as the disjoint-paths problem, minor containment, topological minor containment, $\mathcal{F}$-topological minor deletion, and many more (see the appendix of~\cite{golovach2022model} for several examples indicating the expressibility potential of \FODP). It was already shown in~\cite{golovach2022model} that the model checking problem for disjoint-paths logic is fixed-parameter tractable on classes with excluded minors. The essence of the meta theorem of~\cite{golovach2022model} is again the irrelevant vertex technique.

\paragraph{Our results}
In this work, we prove that for every graph class excluding a fixed graph as a topological minor, the model checking problem for $\FODP$ is fixed-parameter tractable. More precisely, we prove the following result. 

\begin{theorem}\label{main_theorem}
  There is an algorithm that, given a graph $G$ (with additional vertex colors) that excludes a graph $H$ as a topological minor, and an $\FODP$ formula~$\phi(\bar x)$ (over the colored graph vocabulary), decides whether $G\models \exists \bar x\,\varphi(\bar x)$ in time $f(H,\varphi)\cdot |V(G)|^{3}$, where $f$ is a computable function depending on $H$.
  Moreover, if $G\models \varphi(\bar v)$ for some $\bar v\in V(G)^{|\bar x|}$, the algorithm outputs such a tuple $\bar v$.
\end{theorem}

Note that the algorithm needs no information about the graph $H$ that is excluded as a topological minor. 
It works uniformly over all graphs and only the running time depends on $H$. 

This essentially settles the question of tractable model checking for $\FODP$ on subgraph-closed classes, since it is already known (see \cite{SchirrmacherSV22}) that the model checking problem for the more restrictive separator logic is AW[$\star$]-hard on subgraph-closed classes that do not exclude a topological minor and admit efficient encoding. \smallskip

Beyond the direct application to classes that exclude a topological minor, \autoref{main_theorem} has further applications for the design of parameterized algorithms as follows.
Suppose that $\Pi$ is some $\FODP$-expressible parameterized problem whose instance is a graph~$G$ and some integer $k$.
Suppose also $\Pi$ satisfies the following property:
\textsl{The topological minor containment of some particular graph (whose size depends on $k$) in the input graph~$G$ directly certifies a {\sf yes}- or a {\sf no}-answer}.
Then, because of \autoref{main_theorem}, $\Pi$ admits a (parameterized) $\Oof(f(k)\cdot n^3)$ time algorithm. 

As a characteristic example of the above, we mention the \textsc{$\mathcal{F}$-Topological Minor Deletion} problem, defined for some finite collection $\mathcal{F}$ of graphs: \textsl{Given a graph $G$ and an integer $k$, the question is whether $G$ contains a set $S$ of $k$ vertices whose removal from~$G$ gives a graph excluding all graphs in $\mathcal{F}$ as topological minors}.
Fomin, Lokshtanov,  Panolan, Saurabh, and Zehavi proved in~\cite{FominLP0Z20hitti}  that \textsc{$\mathcal{F}$-Topological Minor Deletion} admits a time $\Oof(f(k)\cdot n^4)$ algorithm.
As the presence of a big enough (as a function of $k$ and $\mathcal{F}$) clique as a topological minor implies directly that the instance $(G,k)$ is a {\sf no}-instance, the problem reduces to graphs excluding some clique as a topological minor and can be solved, using \autoref{main_theorem}, in time $\Oof(f(k)\cdot n^3)$.

\paragraph{Our techniques}
Our meta theorem combines the approaches of both~\cite{golovach2022model} and~\cite{PilipczukSSTV22}. 
We start by decomposing the input graph into unbreakable parts,  using the decomposition of Cygan, Lokshtanov, Pilipczuk, Pilipczuk, and Saurabh~\cite{cygan2019minimum}. 
On each part, we distinguish two cases. We handle the first case, when a part excludes a minor, by the model checking result for classes with excluded minors~\cite{golovach2022model}. 
In the second case, when a part contains large minors, we use 
the ``\textsl{generic folio lemma}'' proved by Robertson and Seymour in~\cite{robertson1995graph} in their study of the disjoint-paths problem. 
Based on this lemma, we prove that every formula $\phi(\bar x)$ of disjoint-paths logic on unbreakable graphs with large complete graphs as minors is in fact equivalent to another first-order formula~$\psi(\bar x)$. This generalizes the results of~\cite{PilipczukSSTV22} for separator logic to disjoint-paths logic on unbreakable graphs with large clique minors. 
We then use a dynamic programming approach over tree decomposable graphs to combine the solutions of the unbreakable parts into a global solution.

Our dynamic programming approach is a variation of the standard approach for~\MSO\ or~\FO, which is based on the computation and combination of types, building on the Feferman-Vaught decomposition theorem for these logics. 
A similar decomposition also holds for disjoint-paths logic, however, when decomposing over larger separators, the number of disjoint paths that need to be queried increases. As a consequence, the decomposition theorem cannot be applied unboundedly often in the dynamic programming procedure, as we would have to deal with larger and larger quantifier ranks. 
Instead, our approach uses the ideas of~\cite{LokshtanovR0Z18} and~\cite{grohe2011finding}, where large structures are replaced by small structures of the same type. 
However, since the satisfiability problem already of plain~\FO\ is undecidable, it is not possible to find representative structures of the same type whose size is bounded by any computable function, which would lead to a non-computable dependency on $\phi$ in the running time, and hence to a non-uniform algorithm. 
However, since we can solve the model checking problem on each bag locally, we can compute a small representative of the game graph of the model checking game (enriched with colors and further vertices to ensure the same connectivity for a fixed number of disjoint paths of the substructure). 
With an appropriate rewriting of formulas, this structure can serve as a small representative structure that can be maintained over the dynamic programming procedure and we obtain a uniform fpt algorithm.

Let us also comment on why we failed to use the framework of~\cite{SchirrmacherSV22} to combine the solutions of the unbreakable parts into a global solution, but need to fall back to the dynamic programming approach. The obstacle arises from the fact that the tree decomposition into unbreakable parts (with parts of unbounded size) builds on an underlying tree with unbounded branching degree. This requires a ``simultaneous'' dynamic programming step when progressing from the children of a node to the node itself. Such a step was possible in the case of separator logic, and in fact for all properties that can be encoded as \FO~formulas with \MSO~subformulas that are essentially restricted to speak about the tree order of the tree decomposition into unbreakable parts (this logic is called $\FOMSO$ in~\cite{SchirrmacherSV22}). We were not able to combine the disjoint-paths queries simultaneously over unboundedly many child nodes of a node due to the many possibilities to route disjoint paths through the children (compare with the mentioned problems with the Feferman-Vaught decomposition theorem). This is in contrast to the comparatively simple connectivity queries of separator logic, where the solution for the children is unique and can be encoded into the torso of a bag. Note also that these problems cannot be handled by going to nice tree decompositions with an underlying binary tree. This translation requires a copying of bags (which are unboundedly large) and does not allow an encoding of the decomposition in a tree with a constant number of colors. This would be necessary such that logic can identify copies of a vertex.

\smallskip
A result that is weaker than ours follows also from the work of Lokshtanov, Ramanujan, Saurabh, and Zehavi~\cite{LokshtanovR0Z18} who proved the following result: For every $\mathsf{CMSO}$ sentence~$\phi$, if there is an $\Oof(n^d)$ algorithm (for $d>4$) to test the truth of~$\phi$ over unbreakable graphs, then there is an $\Oof(n^d)$ algorithm to test the truth of $\phi$ over all graphs. Since $\FODP$ is a fragment of MSO, our results on unbreakable graphs together with the result of~\cite{LokshtanovR0Z18} implies the \textsl{existence} of an $\Oof(n^4)$ model checking algorithm for every fixed $\FODP$ sentence~$\phi$ on every class excluding a topological minor.
The main caveat of the proof based on the result of  \cite{LokshtanovR0Z18}  is that it is non-constructive, and we can only conclude the existence of an efficient model checking algorithm. Our algorithm is fully constructive. 

\paragraph{Follow-up work}
The decomposition-and-typing machinery developed here for $\FODP$ on topological-minor-free classes
is extended and used as a central subroutine in follow-up work on $\CMSO$ with disjoint-paths predicates,
where second-order quantification is restricted to sets satisfying prescribed structural restrictions~\cite{Sau26}.

\paragraph{Organization}
\Cref{sec:prelims} collects preliminaries on (topological) minors, (unbreakable) tree decompositions, boundaried graphs, and $\FODP$.
In \Cref{sec:collapse}, we prove the collapse result showing that, on unbreakable graphs with sufficiently large clique minors, $\FODP$ formulas can be replaced by equivalent $\FO$ formulas.
\Cref{sec:dp-game-trees} develops our main technical subroutine: computing bounded-size, type-preserving representatives for boundaried instances, distinguishing the cases of large clique minors and excluding large clique minors.
In \Cref{sec:compositionality}, we establish the compositionality of (extended annotated) types under gluing, in Feferman--Vaught style.
Finally, \Cref{subsec:main-proof} combines these ingredients into the bottom-up dynamic program over a strongly unbreakable decomposition, proving \Cref{main_theorem}, and \Cref{sec:conclusion} concludes with open questions.

\section{Preliminaries}\label{sec:prelims}

Given a $k\in\mathbb{N}$, we use $[k]$ to denote the set $\{1,\ldots,k\}$.
For two $k$-tuples of sets $\bar X=(X_1,\dots,X_k)$ and $\bar Y=(Y_1,\dots,Y_k)$, we write $\bar X\subseteq \bar Y$
to mean $X_i\subseteq Y_i$ for all $i\in[k]$. 

\subsection{Graphs}
\paragraph{Graphs}
All graphs in this paper are finite, undirected graphs without loops and without multi-edges. We write $V(G)$ for the vertex set and $E(G)$ for the edge set of a graph $G$. We write $|G|$ for $|V(G)|$ and $\|G\|$ for $|V(G)|+|E(G)|$. 
For a vertex subset $X\subseteq V(G)$, we write $G[X]$ for the subgraph of $G$ induced by $X$. 
The complete graph with $t$ vertices is denoted $K_t$.

\paragraph{Disjoint paths}
Let $G$ be a graph and $u,v\in V(G)$. A $u$-$v$-path $P$ in $G$ is a sequence $v_1,\dots, v_\ell$ of pairwise different vertices such that $\{v_i,v_{i+1}\}\in E(G)$ for all $1\leq i<\ell$ and $v_1=u$ and $v_\ell=v$. The vertices $v_2,\ldots, v_{\ell-1}$ are the \emph{internal vertices} of $P$ and the vertices~$u$ and~$v$ are its \emph{endpoints}. Two vertices $u,v$ are \emph{connected} if there exists a path with endpoints $u,v$. A graph is connected if every two of its vertices are connected. 
Two paths $P,Q$ are \emph{internally vertex-disjoint} if no vertex of one path appears as an internal vertex of the other path. 
They are \emph{disjoint} if their vertex sets are disjoint. 
A set $X\subseteq V(G)$ is \emph{$m$-partition-linked} in a graph $G$ if all pairs $x_1,y_1,\ldots, x_r,y_r$ of pairwise different vertices from~$X$, $1\leq r\leq m$, are linked by pairwise disjoint paths.

\paragraph{Trees and tree orders}
An acyclic and connected graph $T$ is a \emph{tree}. 
By assigning a distinguished vertex $r$ as the root of a tree, we impose a tree order $\preceq_T$ on $V(T)$ by $x\preceq_T y$ if $x$ lies on the unique path (possibly of length~$0$) from $y$ to $r$. If $x\preceq_T y$, we call $x$ an \emph{ancestor} of $y$ in $T$. Note that by this definition, every node is an ancestor of itself. We write $y\succeq_T x$ to mean $x\preceq_T y$. We drop the subscript $T$ if it is clear from the context. 
We write $\parent(x)$ for the parent of a non-root node $x$ of $T$, and $\children(x)$ for the set of children of $x$ in~$T$. We define $\parent(r)=\bot$. 
A \emph{star} is a tree of height one.
We call the root of a star its \emph{center} and the leaves of a star its \emph{leaves}.

\paragraph{Minors and topological minors}
A graph $H$ is a \emph{minor} of a graph $G$ if $H$ can be obtained from a subgraph of $G$ by contracting edges. 
A \emph{minor model} of $H$ in $G$ is an injective mapping $\eta$ that maps vertices of~$H$ to pairwise disjoint connected subgraphs of $G$ such that for every $\{u,v\}\in E(H)$ there exist $x\in \eta(u)$ and $y\in \eta(v)$ with $\{x,y\}\in E(G)$. The set $\eta(v)$ is called the \emph{branch set} of $v$. 
A {\em{topological minor model}} of a graph~$H$ in a graph $G$ is an injective mapping $\eta$ that maps vertices of $H$ to vertices of $G$ and edges of~$H$ to pairwise internally vertex-disjoint paths in $G$ so that for every $\{u,v\}\in E(H)$ the path $\eta(\{u,v\})$ has the endpoints $\eta(u)$ and $\eta(v)$.
The vertices $\eta(v),v\in V(H)$ are called \emph{principal vertices} of the model in $G$.
A graph $H$ is a \emph{topological minor} of $G$ if there is a topological minor model of $H$ in $G$.
We call a graph $G$ {\em{$H$-minor-free}}, and {\em{$H$-topological-minor-free}}, respectively, if $H$ is not a minor, or topological minor of $G$. 
We call a class $\Cc$ of graphs {\em{(topological-)minor-free}} if there exists a graph~$H$ such that every member of $\Cc$ is $H$-(topological-)minor-free. 
A set $U\subseteq V(G)$ \emph{shatters} a $K_t$-minor of $G$ if there is a minor model of $K_t$ in $G$ such that each of its branch sets intersects $U$.

\paragraph{Tree decompositions}
A {\em{tree decomposition}} of a graph $G$ is a pair $\Tt=(T,\bag)$, where~$T$ is a rooted tree and $\bag\colon V(T)\to 2^{V(G)}$ is a mapping assigning to each node $x$ of $T$ its {\em{bag}} $\bag(x)$, which is a subset of vertices of $G$ such that the following conditions are satisfied:
\begin{enumerate}
  \item For every vertex $u\in V(G)$, the set of nodes $x\in V(T)$ with $u\in \bag(x)$ induces a non-empty and connected subtree of $T$.
  \item For every edge $\{u,v\}\in E(G)$, there exists a node $x\in V(T)$ with $\{u,v\}\subseteq \bag(x)$.
\end{enumerate}

The \emph{width} of a tree decomposition is the value $\max_{t\in V(T)}|\mathsf{bag}(t)|-1$ and the \emph{treewidth} of a graph $G$ is the minimum possible width of a tree decomposition of $G$. 

\medskip
Recall that if $r$ is the root of $T$, then we have $\parent(r)=\bot$. We define $\bag(\bot)=\emptyset$.
For a node $x\in V(T)$, we define
the {\em{adhesion}} of $x$ as
  $\adh(x)\coloneqq \bag(\parent(x))\cap \bag(x);$
  the \emph{margin} of $x$ as $\mrg(x)\coloneqq \bag(x)\setminus \adh(x)$;
  the {\em{cone at $x$}} as
  $\cone(x)\coloneqq \bigcup_{y\succeq_T x} \bag(y);$
  and the {\em{component at $x$}} as
  $\cmp(x)\coloneqq \cone(x)\setminus \adh(x).$ 
  We will see in a moment that we may assume that the component at $x$ is connected for every node $x$, which justifies the name component. 
The {\em{adhesion}} of a tree decomposition $\Tt=(T,\bag)$ is defined as the largest size of an adhesion, that is, $\max_{x\in V(T)}|\adh(x)|$.

A \emph{star decomposition} of a graph $G$ is a tree decomposition $(T,\mathsf{bag})$ where $T$ is a star.
A star decomposition $(T,\mathsf{bag})$
is \emph{tight} if for every leaf $\ell$ of $T$
\begin{enumerate}
  \item the margin $\mrg(\ell)$ is non-empty;
  \item there is no leaf $\ell'\neq \ell$ with $\mathsf{adh}(\ell) = \mathsf{adh}(\ell')$; and
  \item every vertex of $\mathsf{adh}(\ell)$ is adjacent to at least one vertex of each connected component of $G[\mathsf{mrg}(\ell)]$.
\end{enumerate}
We remark that not every star decomposition can be turned into a tight star decomposition with the same bound on adhesion sizes. 

\paragraph{Unbreakability}
A {\em{separation}} in a graph $G$ is a pair $(A,B)$ of subsets of vertices of $G$ such that $A\cup B=V(G)$ and there is no edge with one endpoint in \mbox{$A\setminus B$} and the other endpoint in $B\setminus A$.
The {\em{separator}} of $G$ associated with $(A,B)$ is the intersection \mbox{$A\cap B$} and the {\em{order}} of a separation is the size of its separator.

For $q,k\in \N$, a vertex subset $X$ in a graph $G$ is {\em{$(q,k)$-unbreakable}} if for every separation~$(A,B)$ of~$G$ of order at most $k$, we have
\[|A\cap X|\leq q\qquad\textrm{or}\qquad |B\cap X|\leq q.\]

We say that a graph $G$ is \emph{$(q,k)$-unbreakable} if $V(G)$ is $(q,k)$-unbreakable in $G$.
The notion of unbreakability can be lifted to tree decompositions by requiring it from every individual bag.

\begin{definition}
  Given $q,k\in \mathbb{N}$ and a tree decomposition $\mathcal{T}=(T,\mathsf{bag})$ of a graph $G$,
we say that a node $x\in V(T)$ has the {\em{strong $(q,k)$-unbreakability property}} if $G[\mathsf{bag}(x)]$ is $(q,k)$-unbreakable in $G[\cone(x)]$.
We say that a tree decomposition $\mathcal{T}$ is \emph{strongly $(q,k)$-unbreakable} if all its nodes have the strong $(q,k)$-unbreakability property.
\end{definition}

\begin{theorem}[\cite{cygan2019minimum}]\label{thm:strong-unbreakability}
  There is a function $q(k)\in 2^{\Oh(k)}$ such that for every graph $G$ and $k\in \N$, there exists a strongly $(q(k),k)$-unbreakable tree decomposition of $G$ of adhesion at most $q(k)$. Moreover, given~$G$ and~$k$, such a tree decomposition can be computed in time $2^{\Oh(k^2)}\cdot |G|^2\cdot \|G\|$. 
\end{theorem}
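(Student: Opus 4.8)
The plan is to follow the top-down recursive construction of the unbreakable decomposition due to Cygan, Lokshtanov, Pilipczuk, Pilipczuk, and Saurabh~\cite{cygan2014minimum,cygan2019minimum}. Fix $q=q(k)=2^{\Oh(k)}$, the exact constant to be determined by the subroutine below. The algorithm maintains a set of \emph{pieces}, where a piece is a pair $(W,S)$ with $S\subseteq W\subseteq V(G)$, $|S|\le q$, $G[W\setminus S]$ connected, and every vertex of $S$ having a neighbour in $W\setminus S$; the intended meaning is that the piece becomes a node $x$ of the output with $\cone(x)=W$ and $\adh(x)=S$. Start from the single piece $(V(G),\emptyset)$ and process pieces recursively. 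Given $(W,S)$, call the central subroutine on $G[W]$, $S$, and $k$. If it certifies that $G[W]$ is $(q,k)$-unbreakable (in itself), create a node $x$ with $\bag(x)=W$ and no children. Otherwise it returns a separation $(A,B)$ of $G[W]$ of order at most $k$ which is \emph{balanced}: both $|W\setminus B|$ and $|W\setminus A|$ exceed a fixed constant fraction of $q$, and $\max\{|S\cap A|,|S\cap B|\}\le|S|-k$. Writing $X=V(A)\cap V(B)$, create a node $x$ with $\bag(x)=S\cup X$, and recurse on the sub-pieces obtained from $A$ and from $B$; for $A$ the new boundary is $(S\cap A)\cup X$, of size at most $(|S|-k)+k\le q$, and after splitting the interior into its connected components and discarding degenerate parts so that the three regularity conditions hold, the invariants are preserved (symmetrically for $B$).

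Correctness is then largely bookkeeping. Small bags $\bag(x)=S\cup X$ have size at most $|S|+k\le q$, so they are $(q,k)$-unbreakable in any supergraph (both sides of any separation intersect them in at most $q$ vertices). Leaf bags satisfy $G[\bag(x)]=G[\cone(x)]=G[W]$ and are $(q,k)$-unbreakable in themselves by the certificate; hence the decomposition is strongly $(q,k)$-unbreakable. One checks that $(T,\bag)$ is a tree decomposition — each vertex induces a connected subtree and each edge of $G$ lies inside the piece that first contains both endpoints, hence in a bag — and that adhesions are bounded by $q$ by the invariant. For termination, note that in the second outcome the separator $X$ is nonempty: since $G[W\setminus S]$ is connected and every vertex of $S$ has a neighbour there, $G[W]$ is connected, so a proper separation has nonempty separator. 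Consequently the potential $\Phi=\sum_{\text{active pieces}}(2\,|W\setminus S|+1)$ strictly decreases at each split, bounding the number of nodes by $\Oh(\|G\|)$; combined with one subroutine call of cost $2^{\Oh(k^2)}\cdot|G|\cdot\|G\|$ per node, and a linear-time final regularization pass, this gives total running time $2^{\Oh(k^2)}\cdot|G|^2\cdot\|G\|$.

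The real content lies in the two ingredients that I have treated as black boxes, and the second is the main obstacle. First, the \textbf{subroutine}: given $G[W]$, $S$, $k$, decide whether $G[W]$ is $(q,k)$-unbreakable or else produce a balanced separation of order at most $k$ as above. This is the \emph{randomized contractions} technique — random two-colourings of $V(W)$ followed by minimum-cut computations between the colour classes — which finds a witness of non-unbreakability with probability $2^{-\Oh(k)}$ whenever one exists, derandomized via a splitter family; if no balanced separation shows up across all colourings, the piece is declared unbreakable. The $2^{-\Oh(k)}$ success probability is exactly what forces the threshold $q$ up to $2^{\Oh(k)}$ and contributes the $2^{\Oh(k^2)}$ running-time factor, while the $\mathrm{poly}$ factor comes from the min-cut routine and the size of the splitter family. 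Second, the \textbf{uncrossing}: the separation must be chosen so that the current boundary $S$ is split evenly, or the adhesion invariant $|S|\le q$ is lost after a few levels. Here one exploits submodularity of the boundary function: if some order-$\le k$ separation breaks $G[W]$ but puts too much of $S$ on one side, uncross it with a minimum $S$-balanced separation to obtain one that still breaks $G[W]$, still has order at most $k$, and splits $S$ evenly. Making this precise within the claimed bounds — and doing so while guaranteeing that the boundary vertices of a piece are never separated from its interior in a way that would reinject them into a sibling's boundary — is where the argument is genuinely delicate; the remaining steps are routine.
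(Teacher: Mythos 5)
This statement is not proved in the paper at all: it is imported verbatim from Cygan, Lokshtanov, Pilipczuk, Pilipczuk, and Saurabh \cite{cygan2019minimum} and used as a black box, so there is no in-paper argument to compare yours against. Your sketch does reproduce the architecture of the original proof correctly at a high level: recurse on pieces $(W,S)$, call a subroutine that either certifies $(q,k)$-unbreakability or returns a separation of order at most $k$, make the separator a small internal bag, and recurse on the two sides; the subroutine is indeed based on the randomized-contractions / colour-coding machinery, which is where both the $2^{\Oh(k)}$ value of $q$ and the $2^{\Oh(k^2)}$ factor in the running time come from.

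However, as a proof your proposal has genuine gaps, and you name them yourself: the two ingredients you declare as black boxes are the entire content of the theorem, not auxiliary lemmas. Beyond that, the one quantitative invariant you do commit to is not right as stated: requiring $\max\{|S\cap A|,|S\cap B|\}\le |S|-k$ is impossible whenever $|S|<k$ (in particular at the root, where $S=\emptyset$), and even for large $S$ a separation of order $k$ that ``breaks'' $G[W]$ need not split $S$ this evenly; the actual argument maintains a subtler invariant (roughly, that the boundary of each piece stays well-linked into its interior, with uncrossing against a minimum separator used to restore it), and getting this right is precisely what keeps the adhesion bounded by $q$ across all recursion depths. Your termination/potential argument and the trivial unbreakability of small internal bags are fine, but without a precise statement and proof of the subroutine's guarantees and of the boundary-control step, the proposal is an outline of the known proof rather than a proof.
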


\subsection{Signatures and Logic}
\label{sec:signatures-logic}

\paragraph{Colored graph signatures}
We fix a finite relational signature $\Sigma$ of the form
$\Sigma  = \{E\}\ \cup\ \mathcal C$,
where $E$ is a binary relation symbol (which will always be interpreted as the irreflexive and symmetric edge relation) and $\mathcal C$ is a finite set of unary
relation symbols (called \emph{colors}).
A $\Sigma$-structure is called a \emph{$\Sigma$-colored graph}. 
We will often just speak of graphs and mean $\Sigma$-colored graphs for a signature $\Sigma$ that is clear from the context. 
We remark that this restriction of the signature is no loss of generality, as general relational structures can be handled by the
standard incidence encoding into vertex-colored graphs. 

\paragraph{First-order logic \textup{(\FO)}}
Formulas of $\FO[\Sigma]$ are built from atomic formulas
$x=y$, $E(x,y)$, $C(x)$ for $C\in\mathcal C$,
using the Boolean connectives $\neg,\wedge,\vee$ and the quantifiers $\exists x,\forall x$ over vertex variables.
A variable not in the scope of a quantifier is \emph{free}, and a formula without free variables is a
\emph{sentence}. We write $\phi(\bar x)$ to indicate that the free variables of $\phi$ are contained in $\bar x$.
A \emph{valuation} of $\bar x$ in $G$ is a function $\bar a:\bar x\to V(G)$; we write $V(G)^{\bar x}$ for the
set of all such valuations.
The satisfaction relation $(G,\bar a)\models \phi(\bar x)$ is defined as usual by structural induction on
$\phi$. We also write $G\models \phi(\bar a)$ and define
$\phi(G) := \{\bar a\in V(G)^{\bar x}\mid G\models \phi(\bar a)\}$.

\paragraph{Disjoint-paths queries}
Let $G$ be a graph, let $r\in\mathbb N$, and let
\mbox{$s_1,t_1,\dots,s_r,t_r\in V(G)$}.
We write $\DP_r\bigl[(s_1,t_1),\dots,(s_r,t_r)\bigr]$
to denote the predicate that is true in $G$ if and only if there exist $r$ pairwise vertex-disjoint paths
$P_1,\dots,P_r$ in $G$ such that $P_i$ has endpoints $s_i$ and~$t_i$ for every $i\in[r]$.
(We allow $s_i=t_i$, in which case $P_i$ is the trivial path; if you prefer to only consider pairwise
distinct terminals, you can instead stipulate that the predicate is false whenever the tuple
$(s_1,t_1,\dots,s_r,t_r)$ is not pairwise distinct.)

\paragraph{Disjoint-paths logic \textup{(\FODP)}}
The logic $\FODP[\Sigma]$ is obtained from $\FO[\Sigma]$ by additionally allowing atomic formulas of the form
\[
\DP_r\bigl[(x_1,y_1),\dots,(x_r,y_r)\bigr]\qquad(r\ge 1),
\]
where $x_1,y_1,\dots,x_r,y_r$ are vertex variables.
The satisfaction relation for $\FODP[\Sigma]$ is as for $\FO[\Sigma]$, with the additional clause that for a
$\Sigma$-colored graph $G$ and a valuation $\bar a$,
$(G,\bar a)\models \DP_r\bigl[(x_1,y_1),\dots,(x_r,y_r)\bigr]$
holds if and only if the underlying graph of $G$ satisfies
$\DP_r[(\bar a(x_1),\bar a(y_1)),\dots,(\bar a(x_r),\bar a(y_r))]$.

\subsection{Boundaried graphs and their (extended) folios}

Let $\Sigma$ be a colored graph signature and let $b\in\mathbb{N}.$
A  \emph{$b$-boundaried ($\Sigma$-colored) graph} is a triple $\mathbf{G} = (G,B,\rho)$ where~$G$ is a ($\Sigma$-colored) graph, $B \subseteq V(G)$, $|B| = b,$ and
$\rho\colon B \to [b]$ is a bijective function (see~\cite[Definition 12.9.1]{DowneyF13fund}).
We call $B$ the \emph{boundary} of $\mathbf{G}$, which we denote by $B(\mathbf{G})$, and the vertices of $B$ \emph{the boundary vertices} 
of~$\mathbf{G}$. We also call $G$ \emph{the underlying  graph} of $\mathbf{G}$.
A \emph{boundaried graph} is a $b$-boundaried graph, for some~$b\in\mathbb{N}.$
We remark that from the logical point of view, we see $b$-boundaried $\Sigma$-colored graphs as $\Sigma^+$-colored graphs, where $\Sigma^+$ is obtained by enhancing $\Sigma$ with extra $b$ colors, which are bijectively mapped to the boundary vertices.

As in~\cite{RobertsonS95XIII} (see also~\cite{BasteST20acom}), 
we define the \emph{detail} of $\mathbf{G}$ as  
\[\mathsf{detail}(\mathbf{G}):=\max\{|E(G)|,|V(G)\setminus B|\}.\]

Two $b$-boundaried graphs $\mathbf{G}_{1}=(G_{1},B_{1},\rho_{1})$ and
$\mathbf{G}_{2}=(G_{2},B_{2},\rho_{2})$ are \emph{isomorphic} if $G_{1}$ is isomorphic to $G_{2}$ 
via a bijection $\phi\colon  V(G_{1})\to V(G_{2})$ such that $\rho_{1}=\rho_{2}\circ\phi|_{B_{1}} ,$ i.e.,~the vertices of $B_{1}$ are mapped via $\phi$ to equally indexed vertices  of $B_{2}.$
We denote by~${\mathcal{B}_b}$ the set of all (pairwise non-isomorphic)  $b$-boundaried graphs, and we set $\mathcal{B} := \bigcup_{b\in\N} \mathcal{B}_b$ for the class of all boundaried graphs.

\paragraph{Compatibility}
Let $\mathbf{G}_{1}=(G_1,B_1,\rho_1)$ and $\mathbf{G}_{2}=(G_2,B_2,\rho_2)$ be two $b$-boundaried $\Sigma$-colored graphs.
We say that $\mathbf{G}_{1}$ and $\mathbf{G}_2$ are \emph{compatible} if $\rho_{2}^{-1}\circ \rho_{1}$  is an isomorphism from $G_{1}[B_{1}]$ to $G_{2}[B_{2}]$ such that for every color $C\in\Sigma$ and every $i\in[b]$, $\rho^{-1}_1(i)$ belongs to the interpretation of $C$ in $G_1$ if and only if $\rho^{-1}_2(i)$ belongs to the interpretation of $C$ in $G_2$.

\paragraph{Topological minors of boundaried graphs}
We say that a boundaried graph $\mathbf{G}'=(G',B',\rho')$ is a (boundaried) \emph{subgraph} of $\mathbf{G}$ if
$G'$ is a subgraph of $G$, $B'=B$, and $\rho'=\rho$. 

Let $\textbf{M}=(M,B,\rho)$ be a boundaried graph and let $T\subseteq V(M)$ with $B\subseteq T,$ such that 
every vertex in $V(M)\setminus T$ has degree two.
We  call  $(\textbf{M},T)$ a \emph{\textsf{btm}-pair}
and we  define  $\mathsf{diss}(\textbf{M},T)=(\mathsf{diss}(M, T),B,\rho),$ where $\mathsf{diss}(M, T)$ is the result of the dissolution\footnote{The \emph{dissolution} of a vertex $v$ of degree two 
whose two neighbors are $x$ and $y$ 
is the operation of removing~$v$ and adding the edge $\{x,y\}$ (if it does not already exist).} in $M$ of all vertices in~$V(M)\setminus T$.
We refer to the set $T$ as the set of \emph{branch vertices} of $(\mathbf{M},T)$.
Note that we do not permit dissolution of boundary vertices, as we demand all of them to be branch vertices. 
If $\mathbf{G}=(G,B,\rho)$ is a boundaried graph, we say that $({\bf M},T)$ is an \emph{\textsf{btm}-pair of $\mathbf{G}$} if it is an \textsf{btm}-pair  where $\mathbf{M}$ is a subgraph of $\mathbf{G}$ and $B\subseteq T$. 
We say that a boundaried graph $\textbf{G}_{1}$ is an \emph{topological minor}
of a boundaried graph~$\mathbf{G}_{2},$ denoted by $\mathbf{G}_{1}\preceq\mathbf{G}_{2},$ if $\mathbf{G}_{1}$ and~$\mathbf{G}_{2}$ are compatible and~$\mathbf{G}_{2}$ has a \textsf{btm}-pair $(\textbf{M},T)$
such that  $\mathsf{diss}(\textbf{M},T)$ is isomorphic to $\textbf{G}_{1}.$

\paragraph{Folios and extended folios}
Given a boundaried graph $\mathbf{G}$ and $\delta\in\mathbb{N}$, we define the \emph{$\delta$-folio} of $\mathbf{G}$ as
\[
\delta\mathsf{\text{-}folio}(\mathbf{G}) := \bigl\{\, \mathbf{G}'\in\mathcal{B} \ \bigm|\ \mathbf{G}'\preceq \mathbf{G}
\text{ and }\mathsf{detail}(\mathbf{G}')\le \delta \,\bigr\}.
\]

We next give the definition of \emph{extended folios}, originating from~\cite{grohe2011finding}.
Let $\mathbf{G} = (G,B,\rho)$ be a boundaried graph.
Recall that, given $I\subseteq \binom{[|B|]}{2}$, we write~$\mathbf{G}^I$ 
for the boundaried graph obtained from $\mathbf{G}$ by adding 
in its underlying graph the edges in 
\[
\bigl\{\, \{\rho^{-1}(i),\rho^{-1}(j)\} \ \bigm|\ \{i,j\}\in I \,\bigr\}.\] 

The \emph{extended  $\delta$-folio of $\mathbf{G} = (G,B,\rho)$}
is
\begin{eqnarray*}
\delta\mathsf{\text{-}ext\text{-}folio}(\mathbf{G})& = & \{(I,\delta\mathsf{\text{-}folio}(\textbf{G}^I))\mid 
I\in \binom{[|B|]}{2}\}. 
\label{eq_ext_f}
\end{eqnarray*}

We say that two boundaried graphs are \emph{$\delta$-equivalent} if they are compatible and they have the same extended $\delta$-folio.
We will use the following result from~\cite{grohe2011finding}.

\begin{proposition}[Lemma 2.2 and Theorem 3.1 of~\cite{grohe2011finding}]\label{prop:bound-rep-folio}
There is a computable function $f_{\mathsf{folio}}\colon \mathbb{N}^2\to\mathbb{N}$ such that
for every $\delta,b\in\mathbb{N}$ and every extended $\delta$-folio $\mathcal{F}$, the following holds: Every minimum-size (in terms of vertices) $b$-boundaried graph whose extended $\delta$-folio is~$\mathcal{F}$ has size at most $f_{\mathsf{folio}}(\delta,b)$.
Moreover, there is an algorithm that, given $\delta,b,$ and an $n$-vertex $b$-boundaried graph $\mathbf{G}$, computes, in time $\mathcal{O}_{\delta,b}(n^3)$, a $b$-boundaried graph~$\mathbf{H}$ of size at most $f_{\mathsf{folio}}(\delta,b)$ such that $\mathbf{G}$ and $\mathbf{H}$ are $\delta$-equivalent.
\end{proposition}

\section{Collapse of \texorpdfstring{$\FODP$}{FO+DP} on unbreakable graphs with large clique minors}\label{sec:collapse}

We now show that $\FODP$ collapses to plain first-order logic on appropriately unbreakable graphs that contain large clique minors. 
We use the following ``\textsl{generic folio lemma}'' of Robertson and Seymour from~\cite{robertson1995graph}.

\begin{proposition}[Lemma 5.4 of~\cite{robertson1995graph}]\label{lem:RobSey}
  Let $G$ be a graph and $Z\subseteq V(G)$. Let $t\geq \frac{3}{2}|Z|$ and let $B_1,\ldots, B_t$ be the branch sets of a $K_t$ minor of $G$. Suppose that there is no separation $(G_1,G_2)$ of $G$ of order $<|Z|$ with $Z\subseteq V(G_1)$ and $B_i\cap V(G_1)=\emptyset$ for some $i\in \{1,\ldots, t\}$. 
  Then for every partition $(Z_1,\ldots, Z_m)$ of $Z$ into non-empty subsets, there are pairwise disjoint connected subgraphs $T_1,\ldots, T_m\subseteq G$ such that $V(T_i)\cap Z=Z_i$ for all $i\in \{1,\ldots, m\}$. 
\end{proposition}

Note that in particular under the conditions of the \cref{lem:RobSey}, 
the set $Z$ is $m$-partition-linked. 

In our dynamic programming approach, we will proceed bottom-up and hence will be handling the situation of a bag with its children. 
For this reason, we formulate the collapse result in a slightly more general form and not just for unbreakable graphs.
The situation that we will establish as an invariant in the dynamic program is that of a tight star decomposition. 
We first extract the following combinatorial lemma, which easily implies the claimed collapse result.

\begin{lemma}\label{lem:combinatorial-collapse-unbreakable}
    For all $q,r,c,d\in\mathbb{N}$, there are $k:=k(r)$, $t:=t(q,r)$, $L:=L(q,r,c,d)$, $s:=s(q,r,c,d)\in\mathbb{N}$ 
    such that the following holds.
    Let $G$ be a graph and let $(T,\mathsf{bag})$ be a tight star decomposition of $G$ such that
    \begin{itemize}[nosep]
        \item the adhesion of $(T,\mathsf{bag})$ is at most $d$;
         \item if $s$ is the center of $T$, then $\mathsf{bag}(s)$ is $(q,k)$-unbreakable in $G$ and shatters a $K_t$-minor of $G$; and
        \item $|\mathsf{bag}(\ell)|\le c$ for every leaf $\ell$ of $T$.
    \end{itemize}
   Then for every $U=\{x_1,y_1,\ldots,x_r,y_r\}\subseteq V(G)$,
   \begin{itemize}
    \item there is a (unique) partition $\mathcal{C}$ of $U$ such that two vertices $v,u\in U$ are in the same part $C\in\mathcal{C}$ if there is a sequence of vertices $z_1,\ldots, z_\beta\in U$ such that $v=z_1$ and $u=z_\beta$ and for every $i\in[\beta-1]$, the distance (in $G$) between $z_i$ and $z_{i+1}$ is at most~$L$, such that
    \item for every $C\in\mathcal{C}$, there is a separation $(X_C,Y_C)$ of $G$ with 
    \begin{enumerate}[nosep]
      \item $C\subseteq X_C$, 
      \item $|X_C\cap Y_C|\leq k$, and
      \item $|X_C|\leq k \cdot (\max\{\binom{q}{\le d},k\}\cdot c + q)$, and
    \end{enumerate}
    for any choice of such separations, writing $S_C:=X_C\cap Y_C$, $S:=\bigcup_{C\in\mathcal{C}}S_C$,
and letting $D$ be the union of the connected components of $G[X_C]$ that intersect $C$ (over all~$C\in\mathcal{C}$),
we have:
\begin{enumerate}[nosep]
  \item $|S|\le k^2$,
  \item $|D|\le k^2\cdot(\max\{\binom{q}{\le d},k\}\cdot c+q)$, and
  \item $S$ is $r$-partition-linked in $G\setminus(D\setminus S)$.
\end{enumerate}
\end{itemize}
\end{lemma}

\begin{figure}[ht]
  \centering
  \begin{tikzpicture}
    \def\G{(0,0) ellipse (4.5 and 2.5)}
    \def\D{(3.4,1.7) ellipse (1.5 and 1)}
    \def\Dd{(4.3,.8) ellipse (1.4 and .9)}
    \def\Dr{(3.4,-1.7) ellipse (1.5 and 1)}
    
    \def\C{(3.2,2.7) ellipse (.55 and .55)}
    \def\Cc{(4.9,.9) ellipse (.45 and .45)}
    \def\Cr{(3,-2.7) ellipse (.55 and .55)}

    \def\Kt{(-2,0) ellipse (1.5 and 1.5)}

    \draw \G;

    \draw[rotate=15] \D;
    \draw \Dd;
    \draw[rotate=-15] \Dr;
    
    \draw[blue] \C;
    \draw[blue] \Cc;
    \draw[blue] \Cr;
    
    \draw \Kt;
    
    \begin{scope}
      \clip \G;
      \draw[rotate=15,red] \D;
      \draw[red] \Dd;
      \draw[rotate=-15,red] \Dr;
    \end{scope}

    \begin{scope}
      \clip[rotate=15] \D;
      \draw[red] \G;
    \end{scope}

    \begin{scope}
      \clip \Dd;
      \draw[red] \G;
    \end{scope}

    \begin{scope}
      \clip[rotate=-15] \Dr;
      \draw[red] \G;
    \end{scope}
    
    \node (x1) [scale=.75] at (3.5,2.9) {$x_1$};
    \node (x7) [scale=.75] at (2.9,2.7) {$x_7$};
    \node (y3) [scale=.75] at (3.3,2.5) {$y_3$};

    \node (x2) [scale=.75] at (5,.8) {$x_2$};

    \node (x9) [scale=.75] at (3,-2.5) {$x_9$};
    \node (y1) [scale=.75] at (2.8,-2.7) {$y_1$};
    \node (y5) [scale=.75] at (3.3,-2.8) {$y_5$};

    \node (C1) [blue,scale=.75] at (3.1,2.98) {$C_1$};
    \node (C2) [blue,scale=.75] at (4.77,1.05) {$C_2$};
    \node (Cr) [blue,scale=.75] at (3,-3) {$C_r$};
  
    \node (D1) [scale=.75] at (2.2,2.8) {$D_{C_1}$};
    \node (D2) [scale=.75] at (4.2,1.3) {$D_{C_2}$};
    \node (Dr) [scale=.75] at (2.1,-2.9) {$D_{C_r}$};

    \node (S1) [red,scale=.75] at (2.2,1.9) {$S_{C_1}$};
    \node (S2) [red,scale=.75] at (3.7,.7) {$S_{C_2}$};
    \node (Sr) [red,scale=.75] at (2.3,-1.9) {$S_{C_r}$};

    \node (Kt) at (-2,0) {$K_t$};

    \draw[blue,dotted,line width=1.2] (3.6,2.75) to[out=-45,in=0] (2.8,1.7);
    \draw[blue,dotted,line width=1.2] (2.8,2.6) to[bend right=5] (1.8,2);
    \draw[blue,dotted,line width=1.2] (3.3,2.35) to[bend left=5] (2.6,1.85);

    \draw[blue,dotted,line width=1.2] (4.85,.7) to[bend left=5] (4.05,.5);

    \draw[blue,dotted,line width=1.2] (2.8,-2.45) to[bend left=10] (2.6,-1.75);
    \draw[blue,dotted,line width=1.2] (2.6,-2.7) to[out=180,in=-75] (1.7,-2);
    \draw[blue,dotted,line width=1.2] (3.3,-2.65) to[out=90,in=-50] (2.9,-1.65);
  
    \node (dots) [scale=1,rotate=-25] at (4.5,-1) {$\mathbf{\vdots}$};
  \end{tikzpicture}
  \caption{For a tight star decomposition of a graph $G$, we depict the concepts defined in \cref{lem:combinatorial-collapse-unbreakable}. 
  Here, the partition is $\mathcal{C}:=\{\{x_1,x_7,y_3\},\{x_2\},\ldots,\{x_9,y_1,y_5\}\}$,
  and for each $C\in\mathcal{C}$ and a separation~$(X_C,Y_C)$, we represent $S_C=X_{C}\cap Y_{C}$ with $X_C$ being the small part, we have $D_C$ as the union of the (small) connected components of $G\setminus S_C$ that contain a vertex of $C$.
  }
  \label{fig:dp-queries-unbreakable}
\end{figure}

Now it will be easy to show that under the conditions of \cref{lem:combinatorial-collapse-unbreakable}, there exists an \FO-formula such that 
for all $x_1,y_1,\ldots,x_r,y_r\in V(G)$, we have
\[G\models\DP_r[(x_1,y_1),\ldots, (x_r,y_r)] \quad \text{if and only if} \quad G\models \phi_{\mathsf{dp}}(x_1,y_1,\ldots,x_r,y_r),\]
that is, the disjoint-paths predicate collapses to a plain \FO-formula. 
Let us first prove the lemma. 

\begin{proof}
We set $k:=2r$ and $t:=\max\{q+1,6r^2\}$.
We also set $L:=k \cdot (\max\{\binom{q}{\le d},k\}\cdot c + q)$, where $\binom{q}{\le d}$ denotes the number of subsets of an $n$-element set of size at most $d$. 

We use $B_1,\ldots,B_t\subseteq V(G)$ to denote the branch sets of a minor model of $K_t$ in $G$.

We first consider the partition $\mathcal{C}$ of the set~$U$ such that two vertices $v,u\in U$ are in the same part $C\in\mathcal{C}$ if 
there is a sequence of vertices $z_1,\ldots, z_\beta\in U$ such that $v=z_1$ and $u=z_\beta$ and for every $i\in[\beta-1]$, the distance (in $G$) between $z_i$ and $z_{i+1}$ is at most~$L$.

Note that this partition is unique. 
For every $C\in\mathcal{C}$,
we consider a separation $(X_C,Y_C)$ of~$G$ such that the following conditions are satisfied:

\smallskip
\begin{enumerate}[nosep]
\item $C\subseteq X_C$,
\item there is a branch set $B_j$ for $j\in\{1,\ldots,t\}$ such that $B_j\subseteq Y_C\setminus X_C$, and
\item $|X_C\cap Y_C|$ is minimum possible.
\end{enumerate}

\smallskip
Let $S_C = X_C\cap Y_C$ and let $D_C$ be the union of the connected components of $G[X_C]$ that intersect $C$. 
Since $(C,V(G))$ satisfies the conditions 1-2 above, we get that $|S_C|\leq 2r=k$.

\begin{claim}\label{claim:sizeofD}
    Let $C$ be a subset of $V(G)$ with $|C|\le k$ and let $(X_C,Y_C)$ be a separation of~$G$ such that the following conditions are satisfied:
    \smallskip

    \begin{enumerate}[nosep]
        \item $C\subseteq X_C$,
        \item there is a branch set $B_j$ for $j\in\{1,\ldots,t\}$ with $B_j\cap X_C=\emptyset$, and
        \item $|X_C\cap Y_C|$ is minimum possible.
    \end{enumerate}

    \smallskip
    Also, let $D_C$ be the union of all connected components of $G[X_C]$ that intersect $C$. Then, the size of~$D_C$ is at most $k\cdot\left(\max\{\binom{q}{\le d},k\}\cdot c +q\right).$
\end{claim}
\begin{claimproof}
    We set $S=X_C\cap Y_C$. 
    We first show that because of unbreakability, the part of~$D_C$ in $\bag(s)$ is bounded, i.e.,
    \begin{equation}
        |D_C\cap \bag(s)|\le q. \label{eq1}
    \end{equation}

    To show~\eqref{eq1}, we argue as follows.
    We know that the separation $(C,V(G))$ satisfies the conditions of the claim and therefore
$|S|\leq |C|\le k$.
Since $B_j\subseteq Y_C\setminus X_C$, we have that all branch sets $B_1,\ldots,B_t$ are intersecting $Y_C$. Also, all of them intersect $\mathsf{bag}(s)$, therefore $|Y_C\cap \bag(s)|\geq q+1$.
Thus, by $(q,k)$-unbreakability of $\bag(s)$ in $G$, and since $|S|\le k$,
we have that $|X_C\cap \bag(s)|\le q$, which in turn implies that $|D_C\cap\bag(s)|\le q$.

Using~\eqref{eq1}, we next bound the size of each connected component of $D_C$. In fact, we show that for every connected component $W$ of $G[D_C]$,
\begin{equation}
    |W|\le \max\{\binom{q}{\le d},k\}\cdot c +q. \label{eq2}
\end{equation}

To show~\eqref{eq2}, we argue as follows.
First, we say that a leaf $\ell$ of $T$ is \emph{$W$-contaminated} if~$W$ contains a vertex of $\bag(\ell)$ that is not in $\bag(s)$, i.e., $W\cap \mrg(\ell)\neq \emptyset$. Our goal is to bound the number of $W$-contaminated leaves. We also say that a leaf $\ell$ of $T$ is 
\begin{itemize}
  \item \emph{$W$-rich} if $\adh(\ell)\subseteq W$,
  \item \emph{$W$-poor} if $\emptyset\subset\adh(\ell)\cap W \subset \adh(\ell)$, and
  \item \emph{$W$-empty} if $\adh(\ell)\cap W = \emptyset$.
\end{itemize}
Observe that since $W$ is connected, if a $W$-contaminated leaf $\ell$ is $W$-empty, then $W\subseteq \mrg(\ell)$ and therefore $|W|\le |\bag(\ell)|\le c$.
Therefore, it remains to examine the case where no $W$-contaminated leaf is $W$-empty.
In this case, every $W$-contaminated leaf is either $W$-rich or $W$-poor:
\begin{itemize}
    \item The number of different $W$-rich leaves is at most $\binom{q}{\le d}$; to see this note that because of the tightness of the star decomposition, for every set $A\subseteq \bag(s)$ there is at most one leaf $\ell$ with $\adh(\ell)=A$, which implies that the number of different $W$-rich leaves is upper-bounded by the number of different subsets of $W\cap \bag(s)$ of size at most~$d$, which is at most $\binom{q}{\le d}$.
    
    \item The number of $W$-contaminated $W$-poor leaves is at most $k$; to see this, let $\ell$ be such a leaf. The fact that it is $W$-poor implies the existence of a vertex $v\in \adh(\ell)$ that is not in $W$.
    Pick also a vertex $u\in W\cap \mrg(\ell)$; this exists since $\ell$ is $W$-contaminated. 
    Since the star decomposition is tight, by Property~3 there is a path $P$ in $G[\bag(\ell)]$
    connecting $u$ and $v$ whose internal vertices avoid $\adh(\ell)$ (equivalently, the internal vertices of $P$ lie in $\mrg(\ell)$).
    Note that $u\in W\subseteq X_C$, while $v\notin W$ implies $v\notin X_C$: otherwise, $u$ and $v$
    would be connected inside $G[X_C]$ and hence, $v$ would lie in the component $W$.
    Thus, $v\in Y_C\setminus X_C$, and since $(X_C,Y_C)$ is a separation, the path~$P$ must contain
    a vertex of $S=X_C\cap Y_C$. 
    Choosing the first vertex of $P$ that lies in $S$, it is not in
    $\adh(\ell)$ (as $P$ has no internal vertices in $\adh(\ell)$), hence it lies in $\mrg(\ell)$.
    Therefore, $P$ contains a vertex of $S\cap \mrg(\ell)$.
    Thus, for each $W$-contaminated $W$-poor leaf $\ell$, there should be a vertex in $\mrg(\ell)$ that is also in $S$. Since $|S|\le k$, we have that there are at most $k$ such leaves.    
\end{itemize}

Therefore, the number of $W$-contaminated leaves is at most $ \max\{\binom{q}{\le d},k\}$. Since for every leaf $\ell$, we have that $|\bag(\ell)|\le c$, we get that  $$|W\setminus \bag(s)|\le \sum_{\ell:\ \text{$W$-contaminated}}|\bag(\ell)| \le \max\{\binom{q}{\le d},k\}\cdot c.$$ Combined with~\eqref{eq1}, we get the claimed bound on the size of $W$, which shows~\eqref{eq2}.

\smallskip
To get the bound on $D_C$, note that each connected component of $D_C$ intersects $C$. Thus,
\begin{align*}
    |D_C| & \le \sum_{W:\ \text{conn. comp. of $G[D_C]$}} |W|\\
    & \le |C| \cdot \max\{|W|: W \text{ is a connected component of $G[D_C]$}\}\\
    & \le |C|\cdot\left(\max\{\binom{q}{\le d},k\}\cdot c +q\right),
\end{align*}
where the last inequality follows from~\eqref{eq2}.
\end{claimproof}

We apply \cref{claim:sizeofD} for the set $C$ and the separation $(X_C,Y_C)$. Since $|C|\le 2r=k$ and $(X_C,Y_C)$ satisfies the conditions of the claim, we get that
$$|D_C|\le k\cdot\left(\max\{\binom{q}{\le d},k\}\cdot c +q\right).$$

We set
$D:=\bigcup_{C\in\mathcal{C}} D_C$ and
$S:=\bigcup_{C\in\mathcal{C}} S_C$.
Then, we immediately get $$|D|\leq k^2 \cdot (\max\{\binom{q}{\le d},k\}\cdot c + q)$$ and $|S|\leq k^2$.
We set
$G_1:=G[D]$, and
$G_2:= G\setminus(D\setminus S)$ and prove the following claim.

\begin{claim}\label{claim:generic}
  $S$ is $r$-partition-linked in $G_2$.    
\end{claim}
\begin{claimproof}
We will use \cref{lem:RobSey} to show this.
Suppose, towards a contradiction, that there is a separation of $G_2$ that violates the conditions of \cref{lem:RobSey}.
Among all such separations of $G_2$, choose a separation $(F_1,F_2)$ such that $|F_1\cap F_2|$ is minimum possible and every connected component of $G[F_1]$ intersects $S$.
Keep in mind that for this separation, it also holds that
$|F_1\cap F_2|<|S|$, $S\subseteq F_1$, and there is a $j\in\{1,\ldots,t\}$ such that $B_j\cap F_1=\emptyset$. 
Also, note that the condition that $\frac{3}{2} |S|\le t$, which is required in \cref{lem:RobSey}, holds for the chosen $t$. 

Let $S':=F_1\cap F_2$ and $F_1':= F_1 \cup D$.
Note that $(F_1',F_2)$ is a separation of $G$
such that $F_1'\cap F_2 =S'$ and $F_1\subseteq F_1'$.
Also, it holds that every vertex of every $C\in\mathcal{C}$ is in $F_1'$,
and for some $i\in\{1,\ldots,t\}$, $B_i\cap F_1'=\emptyset$.
Because of \cref{claim:sizeofD} applied for $R=\bigcup \mathcal{C}$, we get that $|F_1'|\le k \cdot (\max\{\binom{q}{\le d},k\}\cdot c + q) = L$.

For every $C\in\mathcal{C}$, we set $S_{C}'$ to be the set of vertices of $S'$ that are connected with some path with some vertex of $S_C$ in $G[F_1']$.
We want to prove that for every two distinct $C,C'\in \mathcal{C}$, $S_{C}'$ and $S_{C'}'$ are disjoint.
Indeed, if there is a vertex $v\in S_{C}'\cap S_{C'}'$, then there is a path $P$ in $G[F_1']$ connecting vertices of $C$ and $C'$, which contains $v$.
By the definition of $\mathcal{C}$, every two vertices $z\in C$ and $z'\in C'$ are in distance more than $L$ in~$G$; thus, $P$ should have a length more than $L$.
The fact that $V(P)\subseteq F_1'$ implies that $|F_1'|>L$, a contradiction.
Therefore, for every two distinct $C,C'\in \mathcal{C}$, the sets $S_{C}'$ and $S_{C'}'$ are disjoint.
This implies that if $|S'|<|S|$, then there is some $C\in\mathcal{C}$ such that $|S_C'|<|S_C|$.
Now notice that~$S_C'$ is separating~$C$ from at least one branch set of the minor model of~$K_t$ and the fact that $|S_C'|<|S_C|$ contradicts the minimality of $S_C$.
Therefore, the conditions of \cref{lem:RobSey} are satisfied, and from this we get that $S$ is $r$-partition-linked in $G_2$.
\end{claimproof}

This concludes the proof of \cref{lem:combinatorial-collapse-unbreakable}.
\end{proof}

Due to~\cref{lem:combinatorial-collapse-unbreakable}, we conclude that, under certain unbreakability assumptions, the disjoint-paths predicate can be expressed in $\FO$. The following result will be applied for $d=k$, which does not cause a problem, since $k$ depends only on $r$.

\begin{corollary}\label{lem:collapse-unbreakable}
    For all $q,r,c,d\in\mathbb{N}$, there are $k:=k(r),t:=t(q,r)\in\mathbb{N}$
    and an $\FO$-formula $\phi_{\mathsf{dp}}(x_1,y_1,\ldots,x_r,y_r)$ such that the following holds.
    Let $G$ be a graph and let $(T,\mathsf{bag})$ be a 
    tight star decomposition of $G$ such that
    \begin{itemize}
        \item the adhesion of $(T,\mathsf{bag})$ is at most $d$;
         \item if $s$ is the center of $T$, then $\mathsf{bag}(s)$ is $(q,k)$-unbreakable in $G$ and shatters a $K_t$-minor of $G$;
        \item $|\mathsf{bag}(\ell)|\le c$ for every leaf $\ell$ of $T$.
    \end{itemize}
   Then for all $x_1,y_1,\ldots,x_r,y_r\in V(G)$, we have that
\[G\models\DP_r[(x_1,y_1),\ldots, (x_r,y_r)] \quad \text{if and only if} \quad G\models \phi_{\mathsf{dp}}(x_1,y_1,\ldots,x_r,y_r).\]
\end{corollary}

\begin{proof}
  We will use the notation from that lemma and fix sets as defined there. 
  That is, we have our partition $\mathcal{C}$ of the set $U=\{x_1,y_1,\ldots,x_r,y_r\}$  such that for every $C\in\mathcal{C}$, there is a minimal separator $S_C$ for which the part that contains $C$ (i.e., the part $D_C$) has size at most $q$. 
  Observe that the sets $S_C$ and resulting connected components are possibly not uniquely determined, however, as stated in the lemma such sets $S_C$ satisfying the conditions exists, and any choice of such $S_C$ has the desired properties. 
  Recall that we defined $G_1:=G[D]$, and $G_2:= G\setminus(D\setminus S)$. 
  Since all numbers in \cref{lem:combinatorial-collapse-unbreakable} are fixed constants, we can existentially quantify and define all objects from the lemma. 

  Let us now assume there are disjoint paths $P_1,\ldots, P_r$ between the $x_i, y_i$ and let us describe their structure. 
  We distinguish two cases. First, if a path $P_i$ does not contain a vertex of $S$, then $P_i$ lies fully in the connected component of $G-S$ containing $x_i$ and $y_i$, and has length at most $L$. 
  Second, if a path $P_i$ contains a vertex of $S$, we can partition it into parts $Q_{x_i}$, $Q_i$, and $Q_{y_i}$, where $Q_{x_i}$ leads from $x_i$ to some vertex $z_i$ of $S$ and contains no other vertex of $S$, $Q_{y_i}$ leads from some vertex $z'_i$ of $S$ to $y_i$ and contains no other vertex of $S$, and $Q_i$ is $P_i-Q_{x_i}-Q_{y_i}$. 
  Because $S$ is $r$-partition-linked in $G_2$, we may assume that $Q_i$ uses no other vertices from the connected components containing any~$C\in \mathcal{C}$. 
  The path segments $Q_{x_i}$ and $Q_{y_i}$ have length at most $L$, while the existence of the path segment $Q_i$ is guaranteed by \cref{lem:combinatorial-collapse-unbreakable}.

  The formula $\phi$ can hence (after stating the existence of the clusters of $\mathcal C$, the separators~$S_C$ and the small sides $X_C$) express the existence of the short paths segments. With a big disjunction, we distinguish all possibilities of forming the paths according to the above structure. 

It follows by construction that 

\[G\models \phi_{\mathsf{dp}}(x_1,y_1,\ldots,x_r,y_r) \Leftrightarrow G\models\DP_r[(x_1,y_1),\ldots, (x_r,y_r)].\qedhere\]
\end{proof}

\begin{corollary}
  Let $q,r\in \N$ and let $G$ be a connected $(q,2r)$-unbreakable graph containing $K_t$ as a minor, where $t\geq \max\{q+1,6r^2\}$. Then, there is an $\FO$-formula~$\phi_{\mathsf{dp}}$, whose length depends only on $r$ and $q$, such that for all $x_1,y_1,\ldots,x_r,y_r\in V(G)$, we have that
  \[G\models\DP_r[(x_1,y_1),\ldots,(x_r,y_r)] \quad \text{if and only if} \quad G\models \phi_{\mathsf{dp}}(x_1,y_1,\ldots,x_r,y_r).\]
\end{corollary}

\section{Computing type-preserving representatives}
\label{sec:dp-game-trees}
A standard way to approach the model checking problem is via computing the $q$-type of the graph, where $q$ is the number of quantifiers of the given formula $\phi$.
The $q$-type of a graph, with respect to some logic $\mathcal{L}$ (here $\mathcal{L}$ will be $\FODP$), is basically the set of all sentences of $\mathcal{L}$ (up to logical equivalence) of at most $q$ quantifiers that the graph satisfies. Therefore, once we know the $q$-type of the input graph, we can directly decide whether it satisfies any given formula of at most $q$ quantifiers. Our strategy to compute the type of the graph is to perform dynamic-programming over a given unbreakable tree decomposition. The main subroutine of this algorithm is to compute, for each node $t$ of the tree of the tree decomposition, a new bounded-size graph that has the same $q$-type as the graph induced by $\cone(t)$ (i.e., the union of the bags of $t$ and all its descendants). However, as already noted in~\cite{golovach2022model}, the fact that $\FODP$ allows for paths of arbitrary length, does not allow to just discard vertices of the same type to get this bounded-size representative (as one would do when dealing with $\FO$). Therefore, we use \emph{annotated $q$-types} (as in~\cite{golovach2022model}) which are defined over graphs extended with a tuple of vertex subsets $R_1,\ldots,R_q$, where the $i$th variable in each formula of the $q$-type is asked to be interpreted in $R_i$. Our overall approach therefore consists in finding bounded-size representatives of the same \emph{annotated type}.
Before presenting the main result of this section, we proceed to the formal definition of annotated types and related notions.

\paragraph{Annotated types of our logic}
Let $\Sigma$ be a colored graph signature.
We say that a sentence~$\varphi$ of $\FODP[\Sigma]$ is in \emph{prenex normal form} if it is of the form
\[Q_1 x_1\ldots Q_{r}x_r\ \psi(\bar{x}),\]
where $r\in\mathbb{N}$, for every $i\in[r]$ $Q_i \in \{\forall,\exists\}$, and
$\psi(\bar{x})$ is a quantifier-free formula of $\FODP[\Sigma]$.
We refer to $r$ as the \emph{quantifier rank} of $\varphi$. We call $x_i$ the \emph{$i$th variable} of $\varphi$.

Let $r\in\mathbb{N}$. For a given $\Sigma$-structure $G$ and a tuple $\bar{R}=(R_1,\ldots,R_r)$ of $r$ subsets of~$V(G)$, we define the \emph{rank-$r$ annotated type} of $(G,\bar{R})$ with respect to $\FODP$, which we denote by $\mathsf{type}_r(G,\bar{R})$, to be the set of all (up to logical equivalence) sentences $\varphi$ of $\FODP[\Sigma]$ in prenex normal form that have quantifier rank at most $r$ and are satisfied in~$G$ when the $i$th variable of $\varphi$ is interpreted as an element of $R_i$.

During our dynamic-programming algorithm, we will be dealing with boundaried graphs, whose boundary corresponds to the adhesion of the current node in the decomposition. Therefore, the number of boundary vertices will be at most $k$, where $k$ is as in~\cref{lem:collapse-unbreakable}. As we mentioned before, the initial task is to compute a small-sized (boundaried) graph that has the same annotated type as the boundaried graph under consideration. However, it turns out that preserving the annotated type is not enough. 
A path (from the disjoint-paths predicate) may cross the boundary several times, yielding a blow-up in the number of considered paths (i.e., making $r$ depend on $k$ in~\cref{lem:collapse-unbreakable}) which we cannot afford. To handle this, we simulate possible connections between boundary vertices in the currently ``unexplored" graph (i.e., the graph without the bags of the descendants of the current node). 
This is done by adding all different combinations of edges between boundary vertices and considering a tuple of annotated types, one for each extension of graph with respect to each such combination of edges. This is captured by the notion of extended annotated types defined below. This is inspired by the notion of extended folios in~\cite{grohe2011finding} (see \cref{sec:subroutines}).

\paragraph{Extended annotated types}
Let $\mathbf{G} = (G,B,\rho)$ be a boundaried graph.
Recall that, given $I\subseteq \binom{[|B|]}{2}$, we write~$\mathbf{G}^I$ 
for the boundaried graph obtained from $\mathbf{G}$ by adding 
in its underlying graph the edges in 
\[
\bigl\{\, \{\rho^{-1}(i),\rho^{-1}(j)\} \ \bigm|\ \{i,j\}\in I \,\bigr\}.
\]
The \emph{extended rank-$r$ annotated type} of $(\mathbf{G},\bar{R})$
is the set
\begin{eqnarray*}
\mathsf{ext\text{-}type}_{r}(\mathbf{G},\bar{R})& = & \{(I,\mathsf{type}_{r}(\textbf{G}^I,\bar{R}))\mid 
I\in \binom{[|B|]}{2}\}. 
\label{eq_ext_type}
\end{eqnarray*}

\paragraph{Representatives}
Let $\delta,r\in\mathbb{N}$. Let $\mathbf{G},\mathbf{G}'$ be two boundaried (colored) graphs
and let $\bar{R}$ and $\bar{R}'$ be two $r$-tuple of subsets of $V(G)$ and $V(G')$, respectively.
We say that $(\mathbf{G}',\bar{R}')$ is a \emph{$(\delta,r)$-representative} of $(\mathbf{G},\bar{R})$ if 
\begin{itemize}
    \item $(\mathbf{G},\bar{R})$ and $(\mathbf{G}',\bar{R}')$ are compatible; 
    \item $\mathsf{ext\text{-}type}_r(\mathbf{G},\bar{R}) = \mathsf{ext\text{-}type}_r(\mathbf{G}',\bar{R}')$; and
    \item $\mathbf{G}$ and $\mathbf{G}'$ are $\delta$-equivalent.
\end{itemize}

\medskip
The purpose of this section is to prove the following lemma.

\begin{lemma}\label{lemma_gadget}
    Let $\Sigma$ be a colored graph signature.
    There are computable functions $f_k:\mathbb{N}\to\mathbb{N}$ and $f_{\mathsf{rep}}:\mathbb{N}^4\to\mathbb{N}$ and an algorithm that given integers $b,r,q,d,c,\delta\in\mathbb{N}$, an $n$-vertex $b$-boundaried $\Sigma$-colored graph $\mathbf{G}$ and an $r$-tuple~$\bar{R}$ of subsets of $V(G)$, such that
    \begin{itemize}
        \item $G$ admits a tight star decomposition
         $(T,\mathsf{bag})$ such that
    \begin{itemize}
        \item if $s$ is the center of $T$, then $\mathsf{bag}(s)$ is $(q,k)$-unbreakable in $G$ for $k:=f_k(r)$; 
        \item the adhesion of $(T,\mathsf{bag})$ is at most $d$;
        \item $|\mathsf{bag}(\ell)|\le c$ for every leaf $\ell$ of $T$; and
    \end{itemize}
        \item $G$ excludes a graph $H$ as a topological minor,
    \end{itemize}
    outputs, in time $\mathcal{O}_{|H|,|\Sigma|,b,r,q,d,c,\delta}(n^2)$, a $(\delta,r)$-representative $(\mathbf{G}',\bar{R}')$ of $(\mathbf{G},\bar{R})$ of size at most $f_{\mathsf{rep}}(\delta,r,b,|\Sigma|)$, where $\bar{R}'\subseteq \bar{R}$.
\end{lemma}

\medskip 
The proof of this lemma spans the rest of this section. 

\subsection{Subroutines}\label{sec:subroutines}
In this subsection, we prove two main subroutines of our algorithms.

\paragraph{(Extended) folios of boundaried graphs}
The following result shows that, in case the sets~$R_i$, $i\in[r]$ have bounded size, representatives can be found efficiently.

\begin{lemma}\label{lemma_subroutine_folio}
    There is a computable function $f:\mathbb{N}^3\to\mathbb{N}$ such that the following holds. Let $\delta,r,z\in\mathbb{N}$, let $\mathbf{G}$ be a boundaried $\Sigma$-colored graph, and let $\bar{R}=(R_1,\ldots,R_r)$ be an $r$-tuple of subsets of $V(G)$, such that for every $i\in[r]$, $R_i$ has size at most~$z$ and contains the boundary set of $\mathbf{G}$. 
    Then there is a $(\delta,r)$-representative $(\mathbf{G}',\bar{R}')$ of $(\mathbf{G},\bar{R})$ of size at most $f'(\delta,r,z)$, where $\bar{R}'\subseteq \bar{R}$.
    Moreover, 
    such representative can be computed in time $\mathcal{O}_{\delta,r,z}(|\mathbf{G}|^3)$.
\end{lemma}
\begin{proof}
    Let $B$ be the boundary set of $\mathbf{G}$ and let $B^\star:=\bigcup_{i\in[r]} R_i\cup B$.
    Also, let $\mathbf{G}^\star$ be the boundaried graph $(G,B^\star,\rho^\star)$, where $\rho^\star$ is an arbitrary bijection from $B^\star$ to $|B^\star|$.
    We set $b:=|B^\star| \le z\cdot r$, $\delta':=\max\{r,\delta\}$ and $f'(\delta,r,z):=f_{\mathsf{folio}}(\delta',b)$, where $f_{\mathsf{folio}}$ is the function of~\cref{prop:bound-rep-folio}.

    We apply the algorithm of~\cref{prop:bound-rep-folio} for $\delta'$, $b$, and~$\mathcal{F}$. 
    This way, we compute a boundaried graph $\widehat{\mathbf{G}}:=(\widehat{G},\widehat{B},\widehat{\rho})$ of size at most $f_{\mathsf{folio}}(\delta',b)$ such that $\mathbf{G}^\star$ and $\widehat{\mathbf{G}}$ are $\delta'$-equivalent.
    For each $i\in[r]$, let $\widehat{R}_i$ be the vertex subset $\{\widehat{\rho}^{-1}\circ\rho^\star(v)\mid v\in R_i\}$ and $\widehat{B}^-:=\{\widehat{\rho}^{-1}\circ\rho^\star(v)\mid v\in B\}$. We also let $\widehat{\mathbf{G}}':=(\widehat{G},\widehat{B}^-,\widehat{\rho}|_{\widehat{B}^-})$.
    It is easy to observe that $\mathsf{ext\text{-}type}_r(\widehat{\mathbf{G}}',\widehat{R}_1,\ldots,\widehat{R}_r)=\mathsf{ext\text{-}type}_r(\mathbf{G},R_1,\ldots,R_r)$.
    Thus, $(\mathbf{G}',\bar{R}'):=(\widehat{\mathbf{G}}^-,\widehat{R}_1,\ldots,\widehat{R}_r)$ is a $(\delta,r)$-representative of $(\mathbf{G},\bar{R})$ of size at most $f'(\delta,r,z)$ and $\bar{R}'\subseteq \bar{R}$. The claimed running time follows directly from the one given by~\cref{prop:bound-rep-folio}.
\end{proof}

\paragraph{Finding representatives in bounded treewidth graphs}
We also show how to find representatives in bounded treewidth graphs. 
\begin{lemma}\label{lemma_rep_tw}
    Let $\Sigma$ be a colored graph signature.
    There is a function $p:\mathbb{N}^3\to\mathbb{N}$ and an algorithm that given integers $\delta,r,b,t\in\mathbb{N}$, an $n$-vertex $b$-boundaried $\Sigma$-colored graph $\mathbf{G}$ whose underlying graph has treewidth at most $t$ and an $r$-tuple $\bar{R}$ of subsets of $V(G)$, outputs, in time $\mathcal{O}_{|\Sigma|,b,\delta,r,t}(n)$, a $(\delta,r)$-representative  $(\mathbf{G}',\bar{R}')$ of $(\mathbf{G},\bar{R})$ of size at most $p(\delta,r,b,|\Sigma|)$, where $\bar{R}'\subseteq \bar{R}$.
\end{lemma}

The proof of~\Cref{lemma_rep_tw}, which appears in the end of this subsection, uses Courcelle's theorem for a certain MSO-formula. This formula is essentially derived from the formulas defining annotated types. We provide some additional definitions and we show a series of intermediate results.

Let $\Sigma$ be a colored graph signature.
We say that a set $\tau$ of sentences of $\FODP[\Sigma]$ is a \emph{rank-$r$ annotated type} (for $\FODP[\Sigma]$) if there is a $\Sigma$-structure $\mathfrak{A}$ and a tuple $\bar{R}$ of $r$ subsets of the vertex set of $\mathfrak{A}$ such that the rank-$r$ annotated type of $(\mathfrak{A},\bar{R})$ is equal to $\tau$ (up to logical equivalence). The following observation is a reformulation of folklore results on $\mathsf{FO}/\mathsf{MSO}$-types, written for annotated types of $\FODP$;
see~\cite[Lemma 2.1]{Shelah75them},~\cite[Definition 2.2.5]{ebbinghaus1999finite},~\cite[Section 3.4]{libkin2004elements}, and~\cite[Subsection~4.2]{GajarskyGK20diff}.

\begin{observation}\label{obs:formulas-types}
    Let $\Sigma$ be a colored graph signature and let $r\in\mathbb{N}$.
    There is a set of sentences $\mathsf{Type}_{r,\Sigma}$ of $\FODP[\Sigma]$, computable from $r$ and $\Sigma$, such that the following holds.
    For every rank-$r$ annotated type $\tau$, there is a $\psi_\tau\in \mathsf{Type}_{r,\Sigma}$ such that for every $\Sigma$-structure $\mathfrak{A}$ and every $r$-tuple $\bar{R}$ of subsets of the vertex set of $\mathfrak{A}$,
    \begin{displaymath}
        \text{the rank-$r$ annotated type of $(\mathfrak{A},\bar{R})$ is $\tau$ if and only if $(\mathfrak{A},\bar{R})\models \psi_\tau$.}
    \end{displaymath}
\end{observation}

We use the definability of types from the previous observation in order to show the following result. 

\begin{lemma}\label{lemma_mso_formula}
    Let $\Sigma$ be a colored graph signature.
    There is a function $f_{\mathsf{bd\text{-}size}}:\mathbb{N}^3\to\mathbb{N}$ and for every $r,b\in\mathbb{N}$, there is an $\MSO$-formula $\phi_{\Sigma,b,r}(Y_1,\ldots,Y_r)$ such that the following holds.
    Let $r,b\in\mathbb{N}$, let $\mathbf{G}$ be a $b$-boundaried $\Sigma$-colored graph, and let $\bar{R}$ be an $r$-tuple of subsets of $V(G)$.
    For every choice of sets $V_1,\ldots,V_r\subseteq V(G)$, we have that $(\mathbf{G},\bar{R})\models \phi_{\Sigma,b,r}(V_1,\ldots,V_r)$ if and only if
    \begin{itemize}
        \item $\mathsf{ext\text{-}type}_r(\mathbf{G},R_1\cap V_1,\ldots,R_r\cap V_r)=\mathsf{ext\text{-}type}_r(\mathbf{G},\bar{R})$; and 
        \item $|\cup_{i\in[r]} V_i|\le f_{\mathsf{bd\text{-}size}}(|\Sigma|,b,r)$.
    \end{itemize} 
\end{lemma}
\begin{proof}
    We show the lemma by induction on $r$.
    For $r=1$, we set $\Sigma^1$ be the extension of~$\Sigma$, obtained by adding an extra unary relation symbol $U^1$. 
    
    Let $B$ be the boundary of $\mathbf{G}$. For every $I\in\binom{[|B|]}{2}$ and every vertex $v\in R_1$ consider the $\Sigma^1$-structure $\mathbf{G}_v^I$ that is obtained from $\mathbf{G}^I$ by coloring $v$ with color $U^1$. Then note that the graphs in $\{\mathbf{G}_v^I|v\in R_1\}$ can be partitioned into equivalence classes with respect to their rank-$(r-1)$ annotated type. Since the number of equivalence classes is bounded by a constant that depends on $r$ and $\Sigma$, for every $I\in\binom{[|B|]}{2}$ there is a set of vertices $V_1^I\subseteq R_1$, whose size depends only on $r$ and $\Sigma$ and for every set $Y\subseteq R_1$ such that $V_1^I\subseteq Y$ we have that $(\mathbf{G}^I,Y,R_2,\ldots,R_r)$ and $(\mathbf{G}^I,R_1,\ldots,R_r)$ have the same rank-$r$ annotated type. We set $V_1:=\bigcup\{V_1^I: I\in\binom{[|B|]}{2}\}$. Note then that the size of $V_1$ is also depending only on $r,b,$ and $\Sigma$. Also, $(\mathbf{G},V_1,R_1,\ldots,R_r)$ and $(\mathbf{G},R_1,\ldots,R_r)$ have the same extended rank-$r$ annotated type.

    For the inductive step, let $i\in[r]$ and assume that there are vertex subsets $V_1,\ldots,V_{i-1}$ such that 
    \begin{itemize}
        \item for each $j\in[i-1]$, $V_j$ is a subset of $R_j$ whose size depends only on $r,b,$ and $\Sigma$, and
        \item $\mathsf{ext\text{-}type}_r(\mathbf{G},R_1,\ldots,R_r) = \mathsf{ext\text{-}type}_r(\mathbf{G},V_1,\ldots,V_{i-1},R_i,\ldots,R_r).$
    \end{itemize}
    Fix an $I\in\binom{[|B|]}{2}$ and a tuple $\bar{v}=(v_1,\ldots,v_{i-1})$ of $i-1$ vertices such that for each $j\in[i-1]$, $v_j\in V_j$. Note that since the size of each $V_j$ depends only on $r,b,$ and $\Sigma$, the number of different such tuples $\bar{v}$ depends only on $r,b,$ and $\Sigma$. We extend the signature~$\Sigma$ to $\Sigma^i$ by adding $r-i$ unary relation symbols, which will be interpreted as the sets $R_{i+1},\ldots,R_r$.
    Let $\Psi_i$ be the set of all (up to logical equivalence) formulas $\phi(x_1,\ldots,x_i)$ of $\FODP[\Sigma^i]$, where $x_1,\ldots,x_i$ are free variables and $\phi(x_1,\ldots,x_i)$ is of the form
    \[Q_{i+1} x_{i+1}\ldots Q_{r}x_r \psi(x_1,\ldots,x_r)\land\bigwedge_{j=i+1}^{r}x_j\in R_j.\]
    We say that two vertices $u,w\in R_i$ are \emph{$(I,\bar{v})$-equivalent}, which we denote by $u\equiv_{\bar{v}}^{I} w$  if for every formula $\phi(x_1,\ldots,x_i)\in\Psi_i$, we have that
    $(\mathbf{G}^I,\bar{R})\models\phi(v_1,\ldots,v_{i-1},u)$ if and only if $(\mathbf{G}^I,\bar{R})\models\phi(v_1,\ldots,v_{i-1},w)$.
    It is easy to see that $\equiv_{\bar{v}}^{I}$ defines an equivalence relation and the number of equivalence classes depends only on $r,b,$ and $\Sigma$.
    Let $V_{\bar{v}}^{I}\subseteq R_i$ be a vertex subset containing exactly one vertex from each equivalence class of $\equiv_{\bar{v}}^{I}$.
    We set $V_i^{I}:=\bigcup\{V_{\bar{v}}^{I} : \bar{v}\in V_1\times\cdots\times V_{i-1}\}$.
    Note that, by construction, the set $V_i^I$ is a subset of~$R_i$ whose size depends only on $r,b,$ and $\Sigma$ and also 
    \[\mathsf{type}_r(\mathbf{G}^I,V_1,\ldots,V_{i-1},R_i,R_{i+1},\ldots,R_r)=\mathsf{type}_r(\mathbf{G}^I,V_1,\ldots,V_{i-1},V_i^H,R_{i+1},\ldots,R_r).\]
    We finally set $V_i:=\bigcup\{V_i^I: I\in\binom{[|B|]}{2}\}$. Observe then that $V_i$ is a subset of $R_i$ whose size depends only on $r,b,$ and $\Sigma$ and that $(\mathbf{G},R_1,\ldots,R_r)$ and $(\mathbf{G},V_1,\ldots,V_i,R_{i+1},\ldots,R_r)$ have the same extended rank-$r$ annotated type.
    
    Note that the above can be written in an $\MSO$-formula $\phi(V_1,\ldots,V_r)$ using the sentences of~\cref{obs:formulas-types}.    
\end{proof}

As proved by Courcelle~\cite{Courcelle90}, $\MSO$ formulas can be efficiently evaluated on graphs of bounded treewidth.

\begin{theorem}[\cite{Courcelle90}]\label{thm_courcelle}
 There is an algorithm and a computable function $f:\mathbb{N}^{2}\to\mathbb{N}$ that, given a $\MSO$-formula $\varphi(Y_1,\ldots,Y_r)$ and a graph $G$ of treewidth $t\in\mathbb{N}$, runs in time $\mathcal{O}_{\varphi,t}(|G|)$ and finds either sets $V_1,\ldots,V_r\subseteq V(G)$ such that $G\models \varphi(V_1,\ldots,V_r)$ or correctly concludes that such sets do not exist.
\end{theorem}

We conclude this section with the proof of~\Cref{lemma_rep_tw}.

\begin{proof}[Proof of~\cref{lemma_rep_tw}]
    Let $f_{\mathsf{bd\text{-}size}}$ and $\phi_{\Sigma,b,r}(Y_1,\ldots,Y_r)$ be the function and the $\MSO$-formula of~\Cref{lemma_mso_formula}.
    We set $z:=f_{\mathsf{bd\text{-}size}}(|\Sigma|,b,r)$ and $p(\delta,r,b,|\Sigma|):=f'(\delta,r,z)$, where $f'$ is the function of~\cref{lemma_subroutine_folio}. Using~\cref{thm_courcelle}, we can compute the sets $V_1,\ldots,V_r\subseteq V(G)$ in time $\mathcal{O}_{r,\Sigma,t}(n)$. Also, within the same time bound, we can compute the extended rank-$r$ annotated type of $(\mathbf{G},\bar{R})$ and the extended $\delta$-folio of $\mathbf{G}$, since they can be expressed in $\MSO$. Then, because of~\cref{lemma_subroutine_folio}, we know that there is a $(\delta,r)$-representative  $(\mathbf{G}',\bar{R}')$  of $(\mathbf{G}, R_1\cap V_1,\ldots,R_r\cap V_r)$ (and therefore of $(\mathbf{G},\bar{R})$) of size at most $f'(\delta,r,z)$ with $\bar{R}'\subseteq \bar{R}$. Since we know the extended rank-$r$ annotated type and the extended $\delta$-folio, we can compute such a representative in time depending only on $\delta,r,b,$ and $\Sigma$, by brute-force search among all such $(\mathbf{G}',\bar{R}')$ of size at most $f'(\delta,r,z)$ with $\bar{R}'\subseteq \bar{R}$.
\end{proof}

\subsection{Finding small representatives in the presence of large clique minors}
To find small-size representatives, we distinguish two cases.
The first case is when the given graph satisfies the assumptions of ~\autoref{lem:collapse-unbreakable}. In that case, we get that annotated types are $\FO$-definable.
Since $\FO$~model checking is tractable in $K_t$-topological-minor-free classes, in a $K_t$-topological-minor-free (rooted) graph $G$, we can compute its representative in \FPT-time.

\begin{lemma}\label{lem:rep-large-minor}  
    Let $\Sigma$ be a colored graph signature.
    There are functions $f_k:\mathbb{N}\to\mathbb{N}$, $f_t:\mathbb{N}^2\to\mathbb{N},$ and $g_{\mathsf{size}}:\mathbb{N}^3\to\mathbb{N}$, and an algorithm that given integers $b,r,q,d,c,\delta\in\mathbb{N}$, an $n$-vertex $b$-boundaried $\Sigma$-colored graph $\mathbf{G}$, and an $r$-tuple $\bar{R}$ of subsets of $V(G)$ such that
    \begin{itemize}
        \item $G$ admits a tight star decomposition
         $(T,\mathsf{bag})$ such that
    \begin{itemize}
         \item if $s$ is the center of $T$, then
         \begin{itemize}
             \item  $\mathsf{bag}(s)$ is $(q,k)$-unbreakable in $G$ for $k:=f_k(r)$; and 
             \item $\mathsf{bag}(s)$ shatters a $K_t$-minor of $G$ for $t:=f_t(q,r)$;
         \end{itemize}  
        
        \item the adhesion of $(T,\mathsf{bag})$ is at most $d$;
        \item $|\mathsf{bag}(\ell)|\le c$ for every leaf $\ell$ of $T$; and
    \end{itemize}
        \item $G$ excludes a graph $H$ as a topological minor,
    \end{itemize}
    outputs, in time $\mathcal{O}_{|H|,|\Sigma|,b,r,q,d,c,\delta}(n^3)$, a $(\delta,r)$-representative  $(\mathbf{G}',\bar{R}')$ of $(\mathbf{G},\bar{R})$ of size at most $g_{\mathsf{size}}(\delta,r,b,|\Sigma|)$, with $\bar{R}'\subseteq \bar{R}$.
\end{lemma}

\begin{proof}
    Our first goal is to compute sets $R_1',\ldots,R_r'\subseteq V(G)$ such that \begin{itemize}
        \item for each $i\in[r]$, $R_i'$ is a subset of $R_i$ whose size depends only on $b,r,$ and $\Sigma$, and
        \item $\mathsf{ext\text{-}type}_r(\mathbf{G},R_1,\ldots,R_r)=\mathsf{ext\text{-}type}_r(\mathbf{G},R_1',\ldots,R_r')$.
    \end{itemize}
    The rough idea is to use the fact that, under the assumptions of the lemma and because of~\cref{lem:collapse-unbreakable}, annotated types are $\FO$-definable and therefore can be computed using the $\FO$ model checking algorithm of Dvo{\v{r}}{\'a}k, Kr{\'a}l, and Thomas~\cite{dvovrak2010deciding} for topological-minor-free graphs.

    Let $B$ be the boundary of $G$. For every $I\in\binom{[|B|]}{2}$, we denote by $G^I$ the underlying graph of $\mathbf{G}^I$. Note that the conditions concerning the star decomposition in the statement of the lemma also hold for $G^I$, for each $I\in\binom{[|B|]}{2}$, since $(q,k)$-unbreakability is preserved when adding edges. Therefore, for each $I\in\binom{[|B|]}{2}$, the graph $G^I$ satisfies the assumptions of~\cref{lem:collapse-unbreakable}.

    We now show how to compute the sets $R_1',\ldots,R_r'$ inductively.
    Let $i\in[r]$ and assume that we have already computed vertex subsets $R_1',\ldots,R_{i-1}'$ such that 
    \begin{itemize}
        \item for each $j\in[i-1]$, $R_j'$ is a subset of $R_j$ whose size depends only on $b,r,$ and $\Sigma$, and
        \item $\mathsf{ext\text{-}type}_r(\mathbf{G},R_1,\ldots,R_r)=\mathsf{ext\text{-}type}_r(G,R_1',\ldots,R_{i-1}',R_i,\ldots,R_r)$.
    \end{itemize}
    We fix an $I\in\binom{[|B|]}{2}$ and a tuple $\bar{v}=(v_1,\ldots,v_{i-1})$ of $i-1$ vertices of $G$ such that for each $j\in[i-1]$, $v_j\in R_j'$. 
    Note that since the size of each $R_j'$ depends only on $b,r,$ and~$\Sigma$, the number of different such tuples $\bar{v}$ depends only on $b,r,$ and $\Sigma$. 
    We extend the signature~$\Sigma$ to $\Sigma^i$ by adding $r-i$ unary relation symbols, which will be interpreted as the sets $R_{i+1},\ldots,R_r$.
    Let $\Psi_i$ be the set of all (up to logical equivalence) formulas $\phi(x_1,\ldots,x_i)$ of $\FODP[\Sigma^i]$, where $x_1,\ldots,x_i$ are free variables and $\phi(x_1,\ldots,x_i)$ is of the form
    \[Q_{i+1} x_{i+1}\ldots Q_{r}x_r \psi(x_1,\ldots,x_r)\land\bigwedge_{j=i+1}^{r}x_j\in R_j.\]
    We say that two vertices $u,w\in R_i$ are \emph{$(I,\bar{v})$-equivalent}, which we denote by $u\equiv_{\bar{v}}^{I} w$  if for every formula $\phi(x_1,\ldots,x_i)\in\Psi_i$, we have that
    $(\mathbf{G}^I,\bar{R})\models\phi(v_1,\ldots,v_{i-1},u)$ if and only if $(\mathbf{G}^I,\bar{R})\models\phi(v_1,\ldots,v_{i-1},w)$.
    It is easy to see that $\equiv_{\bar{v}}^{I}$ defines an equivalence relation and the number of equivalence classes depends only on $b,r,$ and $\Sigma$.
    Let $Q_{\bar{v}}^{I}\subseteq R_i$ be a vertex subset containing exactly one vertex from each equivalence class of $\equiv_{\bar{v}}^{I}$.
    We claim that the set $Q_{\bar{v}}^{I}$ can be computed in time $\mathcal{O}_{|H|,|\Sigma|,b,r,q,d,c}(n^2)$. 
    To see this, note that since~$G^I$ satisfies the assumptions of~\cref{lem:collapse-unbreakable}, 
    every formula $\phi(x_1,\ldots,x_i)\in\Phi_i$ can be rewritten to an $\FO$-formula $\hat{\phi}(x_1,\ldots,x_i)$ by replacing in $\phi$ each (sub)formula of the form $\DP_r[(x_1,y_1),\ldots, (x_r,y_r)]$ with the $\FO$-formula $\phi_{\mathsf{dp}}(x_1,y_1,\ldots,x_r,y_r)$ from~\cref{lem:collapse-unbreakable}, whose size depends on $r,q,d,$ and $c$.
    Then, by applying the model checking algorithm of Dvo{\v{r}}{\'a}k, Kr{\'a}l, and Thomas~\cite{dvovrak2010deciding}, we can check for each $v\in R_i$, which are the formulas $\phi(x_1,\ldots,x_i)\in\Psi_i$ such that $\mathbf{G}^I\models\hat{\phi}(v_1,\ldots,v_{i-1},v)$. This way, we can partition $R_i$ to equivalence classes with respect to $\equiv_{\bar{v}}^{I}$ in time $\mathcal{O}_{|H|,|\Sigma|,b,r,q,d,c}(n^2)$ and pick one vertex from each equivalence class to obtain $Q_{\bar{v}}^{I}$.
    We set $Q_i^{I}:=\bigcup\{Q_{\bar{v}}^{I} : \bar{v}\in R_1'\times\cdots\times R_{i-1}'\}$.
    Note that, by construction, the set $Q_i^I$ is a subset of $R_i$ whose size depends only on $b,r,$ and $\Sigma$ and also 
    \[\mathsf{type}_r(\mathbf{G}^I,R_1',\ldots,R_{i-1}',R_i,R_{i+1},\ldots,R_r)=\mathsf{type}_r(\mathbf{G}^I,R_1',\ldots,R_{i-1}',Q_i^H,R_{i+1},\ldots,R_r).\]
    We finally set $R_i':=\bigcup\{Q_i^I: I\in\binom{[|B|]}{2}\}$. Observe then that $R_i'$ is a subset of $R_i$ whose size depends only on $b,r,$ and $\Sigma$ and that $(\mathbf{G},R_1,\ldots,R_r)$ and $(\mathbf{G},R_1',\ldots,R_i',R_{i+1},\ldots,R_r)$ have the same extended rank-$r$ annotated type.
    
    By applying inductively the above argument for each $i\in[r]$, we compute the claimed sets $R_1',\ldots,R_r'$ in time $\mathcal{O}_{|H|,|\Sigma|,b,r,q,d,c}(n^2)$.
    
    Finally, using the algorithm of~\cref{lemma_subroutine_folio}, in time $\mathcal{O}_{|\Sigma|,b,r,\delta}(n^3)$, we compute a $(\delta,r)$-representative $(\widehat{\mathbf{G}},\widehat{R}_1,\ldots,\widehat{R}_r)$ of $(\mathbf{G},R_1',\ldots,R_r')$ (that is also a $(\delta,r)$-representative of $(\mathbf{G},\bar{R})$) whose size depends only on $\delta,r,b,$ and~$\Sigma$; also, for each $i\in[r]$, $\widehat{R}_i\subseteq R_i$.
\end{proof}

\subsection{Excluding large clique minors}

We now deal with the case where the given graph does \textsl{not} contain large clique minors. We show the following result.

\begin{lemma}\label{lem:rep-excluding-minor}
    There is a computable function $g_\mathsf{size}^\prime:\mathbb{N}^4\to\mathbb{N}$ and an algorithm with the following specifications. Let $b,d,c,t,r,\delta\in\mathbb{N}$, let  $\mathbf{G}$ be an $n$-vertex $b$-boundaried $\Sigma$-colored graph, let $\bar{R}$ be an $r$-tuple of subsets of $V(G)$, and let  $(T,\mathsf{bag})$  be a star decomposition of $G$ of adhesion at most $d$ such that for every leaf $\ell$ of $T$, we have $|\mathsf{bag}(\ell)|\le c$.
    Then in time $\mathcal{O}_{|\Sigma|,b,d,c,t,r,\delta}(n^3)$, we can either:
    \begin{itemize}
        \item report that $\mathsf{bag}(s)$ shatters a $K_t$-minor of $G$; or
        \item compute a $(\delta,r)$-representative  $(\mathbf{G}',\bar{R}')$ of $(\mathbf{G},\bar{R})$ of size at most $g_\mathsf{size}^\prime(\delta,r,b,|\Sigma|)$, with $\bar{R}'\subseteq \bar{R}$.
    \end{itemize}
\end{lemma}

The proof of~\Cref{lem:rep-excluding-minor} is based on the irrelevant vertex technique. We use as a black box the subroutine of the algorithm of~\cite{golovach2022model}. In order to state this result, we need to define walls and related notions.

\paragraph{Walls}
Let $h$ be an even integer with $h\ge 2$. The \emph{elementary wall of height $h$} is the graph~$\overline{W}_h$ with vertex set $\{0,\ldots,2h+1\}\times\{0,\ldots,h\}\setminus\{(0,0),(2h+1,h)\}$ and an edge between any vertices $(i,j)$ and $(i',j')$ if either
\begin{itemize}
    \item $|i-i'|=1$ and $j=j'$, or
    \item $i=i'$, $|j-j'|=1$, and $i$ and $\max\{j,j'\}$ have the same parity.
\end{itemize} 
Given an odd $i\ge 1$, the \emph{$i$th vertical path} of $\overline{W}_h$ is the path of $\overline{W}_h$ with vertices $(i,j-1),(i,j),(i-1,j),(i-1,j+1)$ for every odd $j\ge 1$. Given an $i\ge 1$, the \emph{$i$th horizontal path} of $\overline{W}_h$ is the path of $\overline{W}_h$ with all vertices of $\overline{W}_h$ of the form $(j,i-1)$ where $0\le j\le 2h+1$.

A \emph{wall of height $h$} (for even $h\ge 2$) is a subdivision $W$ of the elementary wall $\overline{W}_h$ of height $h$. The \emph{vertical/horizontal paths} of $W$ are defined as the paths of $W$ corresponding to the subdivisions of the vertical/horizontal paths of $\overline{W}_h$. Given a graph $G$, a \emph{wall in $G$} is a subgraph of $G$ which is isomorphic to a wall. 

The following result is the classical \emph{Grid Theorem} of Robertson and Seymour~\cite{robertson1986graph}, stated for walls instead of grids.

\begin{proposition}\label{walltheorem}
    There is a computable function $f:\mathbb{N}\to\mathbb{N}$ such that for every even $h\ge 2$, every graph of treewidth at least $f(h)$ contains a wall of height $h$ as a subgraph.
\end{proposition}

In the rest, we need the notion of a clique minor being ``gathered around'' a wall.
Let~$W$ be a wall of height $h$ in a graph $G$. Also, let $P_1,\ldots,P_h$ be the vertical paths of~$W$ and $Q_1,\ldots,Q_h$ be the horizontal paths of $W$.
We say that a minor model of $K_t$ (for some $t\in\mathbb{N}$) in $G$ is \emph{grasped} by $W$ if for every branch set $B_k$ of the model, there exist distinct indices $i_1,\ldots,i_t\in\{1,\ldots,h\}$ and distinct indices $j_1,\ldots,j_t\in\{1,\ldots,h\}$ such that $V(P_{i_\ell})\cap V(Q_{j_\ell})\subseteq B_k$ for every $\ell\in\{1,\ldots,t\}$.

To prove~\autoref{lem:rep-excluding-minor}, we need the following result from~\cite{golovach2022model}. Given a boundaried graph $\mathbf{G}=(G,B,\rho)$ and a vertex subset $X\subseteq V(G)$ with $B(\mathbf{G})\subseteq X$, we use $\mathbf{G}[X]$ to denote the boundaried graph $(G[X],B,\rho)$.

\begin{proposition}[\cite{golovach2022model}]\label{prop:mc-minor}
There is a computable function $h\colon\mathbb{N}^2\to\mathbb{N}$
and an algorithm that given
$r,t\in\mathbb{N}$,
a $b$-boundaried $\Sigma$-colored graph $\mathbf{G}$, a tuple $(R_1,\ldots,R_r)$ of $r$ subsets of $V(G)$, and a wall $W$ of $G$ of height $h(t,r)$, 
outputs, in time ${\cal O}_{|\Sigma|,b,t,r}(n)$, one of the following:
\begin{itemize}
\item a report that $G$ has a minor model of $K_{t}$ that is grasped by $W$,
or
\item sets $V,R_1',\ldots,R_r'\subseteq V(G)$ with $B(\mathbf{G})\subseteq V\neq V(G)$ and $R_i'\subseteq R_i\cap V$ for each $i\in[r]$,
such that $\mathsf{ext\text{-}type}_r(\mathbf{G},R_1,\ldots,R_r)=\mathsf{ext\text{-}type}_r(\mathbf{G}[V],R_1',\ldots,R_r')$.
\end{itemize}
\end{proposition}

We stress that~\cref{prop:mc-minor} does not explicitly appear in~\cite{golovach2022model}. The version of this result appearing in~\cite{golovach2022model} uses as a blackbox the algorithmic version of the \emph{Flat Wall theorem} from~\cite[Theorem 8]{SauST24amor}, which in turn is based on the Flat wall theorem of Kawarabayashi, Thomas, and Wollan~\cite[Theorem 7.7]{KawarabayashiTW18anew}. The original statement of the latter reports a minor model of $K_{t}$ that is grasped by $W$ but in \cite[Theorem 8]{SauST24amor}, this part is weakened to just report a $K_t$ minor. By using the original statement of~\cite[Theorem 7.7]{KawarabayashiTW18anew}, one can very easily modify the algorithm of~\cite[Theorem 8]{SauST24amor} to report that $G$ has a minor model of $K_{t}$ that is grasped by $W$ (instead of just reporting that $G$ has $K_t$ as a minor). This change directly propagates to the results of~\cite{golovach2022model}. To reprove this, one should apply these local changes in this chain of results. Since this is a trivial task, that would however require long series of definitions, we omit the proof of~\cref{prop:mc-minor} here.  

Using~\cref{prop:mc-minor} and~\cref{lemma_rep_tw}, we can show~\cref{lem:rep-excluding-minor}.

\begin{proof}[Proof of~\autoref{lem:rep-excluding-minor}]
    First, we apply the algorithm of~\cref{prop:bound-rep-folio} to compute a graph~$\mathbf{H}$ that is $\delta$-equivalent to $\mathbf{G}$. This is done in time $\mathcal{O}_{\delta,b}(n^3)$.

    We next show how to bound the treewidth of $G$.
    Let $h:=\max\{c+1,h(t,r),d^2+1\}$, where $h$ is the function of~\cref{prop:mc-minor}.
    Suppose that the treewidth of $G$ is at least $g(h)$. Then, because of~\cref{walltheorem}, $G$ contains a wall $\widehat{W}$ of height $h$ as a subgraph.
    Also note that at most $\max\{c,d^2\}$ of the non-subdivision vertices of $\widehat{W}$ can be in $V(G)\setminus\mathsf{bag}$.
    By applying the algorithm of~\Cref{prop:mc-minor}, in time ${\cal O}_{|\Sigma|,b,r,t,c,d}(n)$, we get one of the following:
    \begin{itemize}
    \item a report that $G$ has a minor model of $K_{t}$ that is grasped by $W$,
        or
    \item sets $V,R_1',\ldots,R_r'\subseteq V(G)$ with $B(\mathbf{G})\subseteq V\neq V(G)$ and $R_i'\subseteq R_i\cap V$ for each $i\in[r]$, such that $\mathsf{ext\text{-}type}_r(\mathbf{G},R_1,\ldots,R_r)=\mathsf{ext\text{-}type}_r(\mathbf{G}[V],R_1',\ldots,R_r')$.
    \end{itemize}
    In the first case, note that the fact that at most $\max\{c,d^2\}$ of the non-subdivision vertices of $\widehat{W}$ can be in $V(G)\setminus\mathsf{bag}$ implies that every branch set of the model of $K_t$ interesects $\mathsf{bag}(s)$, and therefore we can safely report that $\mathsf{bag}(s)$ shatters a $K_t$-minor of~$G$. 
    In the latter case, we recursively apply the above procedure for $(\mathbf{G}[V],R_1',\ldots,R_r')$. This way, after $\mathcal{O}(n)$ iterations, we get a set $V^\star\subseteq V(G)$ that contains $B(\mathbf{G})$ and sets $R_1^\star,\ldots,R_r^\star\subseteq V^\star$ such that $(\mathbf{G},R_1,\ldots,R_r)$ and $(\mathbf{G}[V^\star],R_1^\star,\ldots,R_r^\star)$  have the same extended rank-$r$ annotated type and the treewidth of $G[V^\star]$ is less than $g(h)$. The total running time of this procedure is ${\cal O}_{|\Sigma|,b,t,r,c,d}(n^2)$.

    We next compute a $(\delta,r)$-representative $(\widetilde{\mathbf{G}},\bar{R}')$ of $(\mathbf{G}[V^\star],R_1^\star,\ldots,R_r^\star)$, by applying the algorithm of~\Cref{lemma_rep_tw}, whose running time is ${\cal O}_{|\Sigma|,b,r,g(h)}(n)$. We also get that $\bar{R}'\subseteq \bar{R}$. Then, consider the boundried graph $\mathbf{G}'$ that is obtained from $\widetilde{\mathbf{G}}$ after enhancing its underlying graph $\widetilde{G}$ by taking ts disjoint union with the underlying graph of $\mathbf{H}$ (the boundaried graph that is $\delta$-equivalent to $\mathbf{G}$)  and identifying the vertices on their common boundary.
    Note that $(\mathbf{G}',\bar{R}')$ is a $(\delta,r)$-representative of $(\mathbf{G},\bar{R})$ whose size depends only on $\delta,r,b,$ and $\Sigma$.
\end{proof}

\subsection{Computing representatives - Proof of~\texorpdfstring{\Cref{lemma_gadget}}{Lemma 4.1}}

We set $k:=f_k(r)$ and $t:=f_t(q,r)$, where $f_k,f_t$ are the corresponding functions from~\cref{lem:rep-large-minor}.
We set
\begin{eqnarray}
   & c_1 := g_{\mathsf{size}}(\delta,r,b,|\Sigma|),  &\text{where $g_{\mathsf{size}}$ is the function of~\cref{lem:rep-large-minor}, and}\notag \\
    & c_2:=g_{\mathsf{size}}^\prime(\delta,r,b,|\Sigma|), & \text{where $g_{\mathsf{size}}^\prime$ is the function of~\cref{lem:rep-excluding-minor}}.\notag
\end{eqnarray}
We then set $f_{\mathsf{rep}}(\delta,r,b,|\Sigma|):=\max\{c_1,c_2\}$.

We apply the algorithm of~\cref{lem:rep-excluding-minor} for the given values of $b,d,c,t,r$, the given $b$-boundaried $\Sigma$-colored graph $\mathbf{G}$ and the given $r$-tuple $\bar{R}$ of subsets of $V(G)$. 
This runs in time $\mathcal{O}_{|\Sigma|,b,d,c,t,r}(n^3)$.
If it outputs an $r$-representative of $(\mathbf{G},\bar{R})$ of size at most $c_2$, then we return this. If it reports that $\mathsf{bag}(s)$ shatters a $K_t$-minor of $G$, then we apply the algorithm of~\cref{lem:rep-large-minor}. 
Since $\mathsf{bag}(s)$ is $(q,k)$-unbreakable for $k=f_k(r)$ and shatters a $K_t$-minor of~$G$ for $t=f_t(q,r)$ and $G$ excludes a graph $H$ as a topological minor (by assumption), this algorithm returns, in time $\mathcal{O}_{|H|,|\Sigma|,b,r,q,d,c}(n^3)$, an $r$-representative of $(\mathbf{G},\bar{R})$ of size at most $c_1$. 
The overall running time is $\mathcal{O}_{|H|,|\Sigma|,b,r,q,d,c}(n^3)$.

\section{Compositionality of types}\label{sec:compositionality}

We now prove the following ``Feferman-Vaught style'' composition lemma. It intuitively says for every separation $(X,Y)$ of a graph $G$,
one can safely replace~$G[X]$ rooted at $X\cap Y$ with some representative of~$G[X]$ without affecting the extended annotated type of the whole graph.
This is one of the key arguments for the correctness of the dynamic programming algorithm described in~\Cref{subsec:main-proof}.

Before stating the lemma, we define the gluing of boundaried graphs of the form~$(\mathbf{G},\bar{R})$.
Let $\mathbf{G}_{1}=(G_1,B_1,\rho_1)$ and $\mathbf{G}_{2}=(G_2,B_2,\rho_2)$ be two compatible boundaried graphs.
We define $(\mathbf{G}_{1},\bar{R}_1)\oplus(\mathbf{G}_{2},\bar{R}_2)$ as the pair $(G,\bar{R})$, where
\begin{itemize}
    \item $G$ is the graph obtained if we take the disjoint union of $G_{1}$ and $G_{2}$ and, for every $i\in[|B_{1}|],$ we identify the vertices $\rho_{1}^{-1}(i)$ and $\rho_{2}^{-1}(i)$, agreeing that after each such identification the vertices of the boundary of $\mathbf{G}_{1}$ prevail, and
    \item $\bar{R}$ is the $r$-tuple $(R_1,\ldots,R_r)$, where for each $i\in[r]$, $R_i$ is obtained by the disjoint union of $R_i^1$ and $R_i^2$ and, for every $i\in[|B_{1}|],$ if $\rho_{1}^{-1}(i)\in R_i^1$, we identify the vertices $\rho_{1}^{-1}(i)$ and $\rho_{2}^{-1}(i)$, agreeing that after each such identification the vertices of $R_i^1\cap B_1$ prevail. 
\end{itemize}

\begin{lemma}\label{lemma_compositionality}
    Let $r\in\mathbb{N}$, let $\mathbf{G}_1,\mathbf{G}_2$ be two boundaried graphs and let $\bar{R}_1,\bar{R}_2$ be two $r$-tuple of subsets of $V(G_1)$ and of $V(G_2)$, respectively. Suppose that
    \begin{itemize}
        \item $(\mathbf{G}_1,\bar{R}_1)$ and $(\mathbf{G}_2,\bar{R}_2)$ are compatible; and
        \item $\mathsf{ext\text{-}type}_r(\mathbf{G}_1,\bar{R}_1) =\mathsf{ext\text{-}type}_r(\mathbf{G}_2,\bar{R}_2).$
    \end{itemize} 
    Then for every boundaried graph $\mathbf{C}$ and every $r$-tuple $\bar{Q}$ of subsets of $V(C)$ such that $(\mathbf{C},\bar{Q})$ is compatible with $(\mathbf{G}_1,\bar{R}_1)$ (and $(\mathbf{G}_2,\bar{R}_2)$), it holds that 
    \[\mathsf{type}_r\left((\mathbf{C},\bar{Q})\oplus(\mathbf{G}_1,\bar{R}_1)\right)=\mathsf{type}_r\left((\mathbf{C},\bar{Q})\oplus(\mathbf{G}_2,\bar{R}_2)\right).\]
\end{lemma}

\begin{proof}
We set $(R_1^1,\ldots,R_r^1):=\bar{R}_1$ and $(R_1^2,\ldots,R_r^2):=\bar{R}_2$.
Since $(\mathbf{G}_1,\bar{R}_1)$ and $(\mathbf{G}_2,\bar{R}_2)$ are compatible, $|B(\mathbf{G}_1)|=|B(\mathbf{G}_2)|$. We set $b:=|B(\mathbf{G}_1)|$.
Our goal is to show the equivalent statement:
for every sentence $\varphi$ of $\FODP[\Sigma]$ in prenex normal form that has quantifier rank at most $r$,
\[(\mathbf{C},\bar{Q})\oplus(\mathbf{G}_1,\bar{R}_1)\models \phi \qquad \text{if and only if}\qquad (\mathbf{C},\bar{Q})\oplus(\mathbf{G}_2,\bar{R}_2)\models\phi,\]
when the $i$th variable of $\varphi$ is interpreted as an element of $R_i^1$ (resp. $R_i^2$).
We denote by~$G$ and $G'$ the (uncolored) underlying graph of $(\mathbf{C},\bar{Q})\oplus(\mathbf{G}_1,\bar{R}_1)$ and $(\mathbf{C},\bar{Q})\oplus(\mathbf{G}_2,\bar{R}_2)$, respectively.

This is shown by induction on the structure of $\varphi$. In fact, the proof is standard for all atomic predicates except the disjoint-paths predicate, for boolean connectives, as well as first-order quantifiers, see e.g., \cite[Lemma~2.3]{Grohe2007LogicGA}.
In the rest of the proof, we focus on the disjoint-paths predicate. In particular, we show the following claim. To make the presentation more accesible, we assume that $B(\mathbf{G}_1)=B(\mathbf{G}_2)=B(\mathbf{C})\subseteq V(G_1)$ and we denote this set by $B$. Also, given two tuples $\bar{x},\bar{y}$, we denote by $\bar{x}\cup\bar{y}$ the set containing all elements appearing either in $\bar{x}$ or in $\bar{y}$.

\begin{claim}
  We consider the following tuples of vertices:
  \begin{itemize}
    \item $\bar{x}$ is a tuple of $i_1$ vertices in $V(C)\setminus B$;
    \item $\bar{z}$ is a tuple of $i_2$ vertices in $B$; and
    \item $\bar{y}$ is a tuple of $i_3$ vertices in $V(G_1)\setminus B$,
  \end{itemize}
  such that $i_1+i_2+i_3=r$.
  Then, there is a tuple $\bar{y}':=(y_1',\ldots,y_{i_3}')$ of $i_3$ vertices in $V(G_2)\setminus B$ such that
  \begin{itemize}
    \item for every $i\in[i_3]$ and every $j\in [r]$, $y_i\in R_j^1$ if and only if $y_i'\in R_j^2$; and   
    \item for every collection $(s_1,t_1),\ldots,(s_r,t_r)$ of pairs of vertices from $\bar{x}\cup\bar{z}\cup\bar{y}$, we have 
  \[G\models \DP_r[(s_1,t_1),\ldots, (s_r,t_r)]\qquad \text{if and only if}\qquad G'\models\DP_r[(s_1',t_1'),\ldots, (s_r',t_r')],\] 
  where $(s_1',t_1'),\ldots, (s_r',t_r')$ is obtained from $(s_1,t_1),\ldots,(s_r,t_r)$ by repacing every occurance of $y_i$ with the corresponding $y_i'$, i.e., $s_i'=y_j'$ (resp. $t_i'=y_j'$) if $s_i=y_j$ (resp. $t_i=y_j$) for some $j\in[i_3]$ and otherwise $s_i'=s_i$ (resp. $t_i'=t_i$).
  \end{itemize} 
\end{claim}
\begin{claimproof}
Suppose that $G\models\DP_r[(s_1,t_1),\ldots,(s_r,t_r)]$.
This implies the existence of pairwise internally vertex-disjoint paths $P_1,\ldots,P_r$ in $G$,
where for every $i\in[r]$,
$P_i$ is an $(s_i,t_i)$-path.
The following arguments are inspired by the proof of~\cite[Lemma 2.4]{grohe2011finding}.
We define a graph $F^\star$ on $B\cap \bigcup_{i\in[r]}V(P_i)$ such that two vertices $a,b$ of $F^\star$ are adjacent if there is some $i\in[r]$ and a subpath of $P_i$ with endpoints $a$ and $b$ and every internal vertex in $V(C)\setminus B$.
For every edge $ab\in E(F^\star)$, we denote by $P_{ab}$ this subpath.
Let also \[I:=\big\{\{\rho(a),\rho(b)\}: ab\in E(F^\star)\big\}\]
and let $G_1^I$ and $G_2^I$ be the underlying graphs of $\mathbf{G}_1^I$ and $\mathbf{G}_2^I$, respectively.

For every path~$P$ in $G$ that has endpoints in~$G_1$, we write $\mathsf{compressed}(P)$ to refer
the path of $G_1^I$ obtained by replacing subpaths of $P$ whose internal vertices are in $V(C)\setminus B$ by the appropriate edges of~$F^\star$.
Similarly, if $Q$ is a path of~$G_1^I$, we write $\mathsf{expanded}(Q)$ to refer to the path of $G$ obtained by replacing each edge $ab$ of~$F^\star$ by the corresponding path $P_{ab}$.
For every $i\in[r]$,
we define a path $\hat{P}_i$ as follows:
\begin{itemize}
\item if $s_i,t_i\in V(G_1)$, then $\hat{P}_i=\mathsf{compressed}(P_i)$,
\item if $s_i\in V(G_1)$ and $t_i\in V(C)\setminus B$ then $\hat{P}_i$ is the $(s_i,w_i)$-subpath of $P_i$, where $w_i$ is the last vertex of $P_i$ (traversing from $s_i$ to $t_i$) that belongs to $V(G_1)$; we define  $\hat{P}_i$ analogously when $t_i\in V(G_1)$ and $s_i\in V(C)\setminus B$,
\item if $s_i,t_i\in V(C)\setminus B$, and $V(P_i)\subseteq V(C)\setminus B$, then $\hat{P}_i$ is defined as the empty graph,
\item if $s_i,t_i\in V(C)\setminus B$ and $P_i$ intersects $V(G_1)$ at a single vertex $w$ (in $B$), then $\hat{P}_i$ is the trivial $(w,w)$-path,
\item if $s_i,t_i\in V(C)\setminus B$ and $P_i$ intersects $V(G_1)$ at least twice, then if $w$ and $u$ are the first and the last vertex of $P_i$ in $V(G_1)$, then $\hat{P}_i$ is the $(u,w)$-subpath of $\mathsf{compressed}(P_i)$.
\end{itemize}

Observe that the paths $\hat{P}_1,\ldots,\hat{P}_r$ are pairwise internally vertex-disjoint paths in~$G_1^I$.
Also, note that the fact that $\mathsf{ext\text{-}type}_r(\mathbf{G}_1,\bar{R}_1) =\mathsf{ext\text{-}type}_r(\mathbf{G}_2,\bar{R}_2)$ implies that
\[\mathsf{type}_r(\mathbf{G}_1^I,\bar{R}_1) = \mathsf{type}_r(\mathbf{G}_2^I,\bar{R}_1).\]
Therefore, the existence of the paths $\hat{P}_1,\ldots,\hat{P}_r$ in $G_1^I$ implies 
the existence of a tuple of vertices $\bar{y}'=(y_1',\ldots,y_{i_3}')$ and a collection of paths $\hat{Q}_1,\ldots,\hat{Q}_r$, with the following properties:
\begin{itemize}
    \item $\hat{Q}_1,\ldots,\hat{Q}_r$ are pairwise internally vertex-disjoint paths in $G_2^I$.
    \item if $\hat{P}_i$ has endpoints $u,w$ then $\hat{Q}_i$ has endpoints~$u',w'$, where $u'=y_j'$ if $u=y_j$ (resp. $w'=y_j'$ if $w=y_j$) for some $j\in[i_3]$,
and $u'=u$ (resp. $w'=w$) otherwise.
\end{itemize} 
Note that, if we set $Q_i =\mathsf{expanded}(\hat{Q}_i)$ for each $i\in[r]$,
then each $Q_i$ is an $(s_i',t_i')$-path and $Q_1,\ldots, Q_r$ are pairwise internally vertex-disjoint paths of $G'$.
Therefore $G'\models\DP_r[(s_1',t_1'),\ldots,(s_r',t_r')]$.

For the reverse implication, i.e., that $G'\models\DP_r[(s_1',t_1'),\ldots,(s_r',t_r')]$ implies $G\models\DP_r[(s_1,t_1),\ldots,(s_r,t_r)]$, the proof is symmetric.
\end{claimproof}
\end{proof}

In order to show the correctness of our algorithm, we also need to show that when replacing cones with small representatives, we still exclude a topological minor after the replacement. This is materialized in~\cref{lem:excluded-top-minor},
is a direct corollary of~\Cref{lemma_compositionality} and the fact that graphs with the same extended rank-$2h$ annotated type are $h$-equivalent. The latter can be easily derived from the fact that the existence of a graph $H$ with $\|H\|\le h$ as a topological minor can be expressed in an $\FODP$ formula of quantifier rank at most $2h$.

\begin{observation}\label{obs_types_to_folios}
    Let $h\in\mathbb{N}$, let $\mathbf{G},\mathbf{G}'\in\mathcal{B}$, and let $\bar{R},\bar{R}'$ be two $h^2$-tuples subsets of $V(G)$ and $V(G')$, respectively. If $(\mathbf{G},\bar{R})$ and $(\mathbf{G}',\bar{R}')$ have the same extended rank-$2h$ annotated type, then $\mathbf{G}$ and $\mathbf{G}'$ are $h$-equivalent.
\end{observation}

Given a graph $G$, a tree decomposition $\mathcal{T}=(T,\mathsf{bag})$, and a node $t\in V(T)$,
$\mathbf{G}_t$ denotes the boundaried graph $(G[\cone(t)],\adh(t),\rho)$, for some $\rho\colon \adh(t)\to|\adh(t)|$. We also use~$G_t$ to denote the graph $G[\mathsf{cone}(t)]$.

\begin{corollary}\label{lem:excluded-top-minor}
    Let $c,h\in\mathbb{N}$.
    Let $G$ be a graph and let $\mathcal{T}=(T,\mathsf{bag})$ be a tree decomposition of~$G$.
    Let $t$ be a non-root node in $V(T)$.
    Let~$\mathbf{G}'$ be a boundaried graph of order at most $c$ such that $\mathbf{G}_t$ and $\mathbf{G}'$ are $h$-equivalent and let~$\hat{G}$ be the graph obtained from $G$ by replacing $G_t$ with the underlying graph of $\mathbf{G}'$.
    Then, if $G$ excludes a graph $H$ of size at most $h$ as a topological minor, the same holds for $\hat{G}$.
\end{corollary}

\section{The model checking algorithm}
\label{subsec:main-proof}
In this section, we present our model checking algorithm and we prove its correctness. To show its correctness, we need to show that unbreakability and tightness are preserved when we do a replacement. This is shown in the first subsection. Then, we proceed to the proof of~\cref{main_theorem}.

\subsection{Preserving unbreakability}

Recall that, given a graph $G$, a tree decomposition $\mathcal{T}=(T,\mathsf{bag})$, and a node $t\in V(T)$,
$\mathbf{G}_t$ denotes the boundaried graph $(G[\cone(t)],\adh(t),\rho)$, for some $\rho\colon \adh(t)\to|\adh(t)|$. We also use $G_t$ to denote the graph $G[\mathsf{cone}(t)]$.
We next show that in a tree decomposition of adhesion~$\ell$, if we replace $G_t$ with a graph with the same $\ell$-folio, we have that strong unbreakability is preserved in the obtained tree decomposition.

\begin{lemma}\label{obs_maintain_regularity}
    Let $\ell\in\mathbb{N}$.
    Let $G$ be a graph and let $\mathcal{T}=(T,\mathsf{bag})$ be a tree decomposition of~$G$ of adhesion~$\ell$.
    Let~$t$ be a node in $V(T)$.
    Let~$\mathbf{G}'$ be a boundaried graph such that~$\mathbf{G}_t$ and~$\mathbf{G}'$ have the same $\ell$-folio.
    Also, let $\hat{G}$ be the graph obtained from $G$ by replacing $G_t$ with $G'$.
    Let $\hat{\mathcal{T}}$ be the tree decomposition of $\hat{G}$ obtained by~$\mathcal{T}$ after removing all (strict) descendants of $t$ and setting $\mathsf{bag}(t)=V(G')$.
    Then, for every node $x\neq t$ of $\hat{\mathcal{T}}$, it holds that if $x$ has the strong $(q,k)$-unbreakability property in $\mathcal{T}$, then it also has it in $\hat{\mathcal{T}}$.
\end{lemma}
\begin{proof}
    Let $x$ be a node of $\hat{\mathcal{T}}$ that is not $t$ and let $G_x$ (resp.~$\hat{G}_x$) be the graph induced by the cone of $x$ in $\mathcal{T}$ (resp.~$\hat{\mathcal{T}}$).
    Note that, by definition of $\hat{G}$ and $\hat{\mathcal{T}}$, $\mathsf{bag}(x)$ is the same in both $\mathcal{T}$ and $\hat{\mathcal{T}}$.
    Our goal is to show that if $\mathsf{bag}(x)$ is $(q,k)$-unbreakable in $G_x$, then it is also $(q,k)$-unbreakable in $\hat{G}_x$.

    Assume towards a contradiction that there is a separation $(A,B)$ of $\hat{G}_x$ of order at most $k$ such that $|A\cap \mathsf{bag}(x)|>q$ and $|B\cap \mathsf{bag}(x)|>q$.
    Then, consider a minimum-order separation $(A^\star,B^\star)$ of $G_x$ with $A\cap \mathsf{bag}(x)\subseteq A^\star$ and $B\cap\mathsf{bag}(x)\subseteq B^\star$.
    
    We claim that $|A^\star\cap B^\star|\le k$. To see this, observe that $A^\star\cap B^\star$ intersects at most $\ell$ vertices of $G[\mathsf{cone}(t)]$, since otherwise we could replace the part of $A^\star\cap B^\star$ in $\mathsf{cone}(t)$ with $\mathsf{adh}(t)$ and still get a separation with the claimed properties and of smaller order. Also, recall that in $\hat{G}_x$, the graph $G[\mathsf{cone}(t)]$ is replaced with a graph of the same $\ell$-folio. Therefore, the size of the part of $A^\star\cap B^\star$ that is disjoint from $\mathsf{bag}(x)$ should be at most the size of $(A\cap B)\setminus \mathsf{bag}(x)$. Therefore, $|A^\star\cap B^\star|\le k$.
     
    Hence, for the separation $(A^\star,B^\star)$ of $G_x$, we have that $|A^\star\cap B^\star|\le k$ and $|A^\star\cap \mathsf{bag}(x)|\ge |A\cap\mathsf{bag}(x)|>q$ and $|B^\star\cap \mathsf{bag}(x)|\ge |B\cap\mathsf{bag}(x)|>q$, which contradicts the fact that $\mathsf{bag}(x)$ is $(q,k)$-unbreakable in $G_x$.
\end{proof}

\subsection{Proof of \texorpdfstring{Theorem 1.1}{\autoref{main_theorem}}}
Now we can finally prove~\autoref{main_theorem} which we repeat for convenience.

\setcounter{section}{1}
\setcounter{theorem}{0}

\begin{theorem}
    There is an algorithm that, given a graph $G$ (with additional vertex colors) that excludes a graph $H$ as a topological minor, and an $\FODP$ formula~$\phi(\bar x)$ (over the colored graph vocabulary), decides whether $G\models \exists \bar x\,\varphi(\bar x)$ in time $f(H,\varphi)\cdot |V(G)|^{3}$, where $f$ is a computable function depending on $H$.
  Moreover, if $G\models \varphi(\bar v)$ for some $\bar v\in V(G)^{|\bar x|}$, the algorithm outputs such a tuple $\bar v$.
\end{theorem}

\setcounter{section}{6}

We set $\varphi^+:=\exists \bar{x}\,\varphi(\bar{x})$. We also assume that $\varphi^+$ is given in prenex normal form and we let $r$ be its quantifier rank. We use $n$ to denote $|V(G)|$ and $\Sigma$ to denote the signature of $G$.

    We set $r':=\max\{r,2|H|\}$.
    We also set 
    \begin{align*}
    k=f_k(r'), && \delta=b=q=d=q(k), && c_1:=f_{\mathsf{size}}(|\Sigma|,b,r'),\text{ and } && c_2:=p(|\Sigma|,b,r'),
    \end{align*}
    where $f_k$ and $f_{\mathsf{size}}$ are the functions of~\Cref{lemma_gadget}, $q$ is the function of~\Cref{thm:strong-unbreakability}, and~$p$ is the function of~\cref{lemma_rep_tw}.
    We finally set $c:=\max\{c_1,c_2\}$.

\subsubsection{The algorithm}
The algorithm proceeds in a bottom-up manner along the tree decomposition by computing representatives and doing replacements. Our ultimate goal is to compute a representative of $(G,V(G)^{r})$ of size at most $c$.

We first compute a strongly $(q,k)$-unbreakable decomposition $\mathcal{T}=(T,\mathsf{bag})$ of adhesion at most $d$ using the algorithm of~\cref{thm:strong-unbreakability}.
Then, we consider a reverse post-order traversal of $T$, i.e.,~an ordering of $V(T)$ such that each node is visited after all its (strict) descendants are visited.
We process the nodes by increasing order.
Before any iteration, we set $G^\bullet:=G$, $\bar{R}^\bullet=V(G)^r$, and $\mathcal{T}^\bullet:=\mathcal{T}$.

\paragraph{Iteration step}
We work with the graph $G^\bullet$ and its tree decomposition $\mathcal{T}^\bullet$.
Let $t$ be the node of $\mathcal{T}^\bullet$ that is currently processed.

Let $\mathsf{MaxAdh}(t)$ be the set of all children $z$ of $t$
such that there is no child $z'\neq z$ of~$t$ such that $\adh(z)\subseteq \adh(z')$.
Note that for every child $y$ of $t$ there is a $z\in\mathsf{MaxAdh}(t)$ such that $\adh(y)\subseteq \adh(z)$.
For every $z\in\mathsf{MaxAdh}(t)$, we set \[G_z^{\mathsf{Max}}:=\bigcup\{G^\bullet[\cone(y)]\mid \text{$y$ is a child of $t$ with $\adh(y)\subseteq \adh(z)$}\}\]
and $\mathbf{G}_z^\mathsf{Max}:=(G_z^{\mathsf{Max}},\adh(z),\rho)$, where $\rho$ is an arbitrary bijection from $\adh(z)$ to $|\adh(z)|$.

Using the algorithm of~\cref{lemma_rep_tw}, we compute a collection of pairs \[\mathcal{H}:=\{(z,(\mathbf{H}_z,\bar{R}_z))\mid z\in\mathsf{MaxAdh}(x)\},\] where for each $z\in\mathsf{MaxAdh}(x)$, $(\mathbf{H}_z,\bar{R}_z)$ is a $(\delta,r)$-representative of $(\mathbf{G}_z^{\mathsf{Max}},\bar{R}^\bullet\cap V(G_z^\mathsf{Max}))$ of size at most~$c_1$.

Let $(G',\bar{R}')$ be the tuple obtained from $(G^\bullet,\bar{R}^\bullet)$ by replacing, for each $z\in \mathsf{MaxAdh}(t)$,  $(\mathbf{G}_z^{\mathsf{Max}},\bar{R}^\bullet\cap V(G_z^{\mathsf{Max}}))$ with $(\mathbf{H}_z,\bar{R}_z)$.

Let $R_1',\ldots,R_r'\subseteq V(G')$ such that $\bar{R}'=(R_1',\ldots,R_r')$.
We set
\begin{itemize}
    \item $\mathbf{C}:= (G'\setminus \mathsf{comp}(t),\mathsf{adh}(t),\rho)$, where $\rho$ is an arbitrary bijection from $\mathsf{adh}(t)$ to~$|\mathsf{adh}(t)|$; 
    \item $\bar{Q}:=\left(R_1'\setminus\mathsf{comp}(t),\ldots,R_r'\setminus\mathsf{comp}(t)\right)$;
    \item $\mathbf{G}_t':=(G_t',\mathsf{adh}(t),\rho)$, where $G_t'$ is the subgraph of $G'$ induced by $\mathsf{bag}(t)$ and $\bigcup\{V(H_z) \mid z\in\mathsf{MaxAdh}(t)\}$; and
    \item $\bar{R}_t':=\left(R_1'\cap V(G_t'),\ldots,R_r'\cap V(G_t')\right)$.
\end{itemize}
We apply the algorithm of~\cref{lemma_gadget} for  $(\mathbf{G}_t',\bar{R}_t')$.
If this algorithm outputs a $(\delta,r)$-representative~$(\mathbf{G}^\star,\bar{R}^\star)$ of $(\mathbf{G}_t',\bar{R}')$, then we set $(G^\bullet,\bar{R}^\bullet):=(\mathbf{C},\bar{Q})\oplus(\mathbf{G}^\star,\bar{R}^\star)$.
We also update $\mathcal{T}^\bullet$ by removing all children of $t$ and setting $\mathsf{bag}(t)=V(G^\star)$. This finishes the iterative step.

\paragraph{Final step}
When all nodes have been processed, apply the algorithm of~\cref{thm_courcelle} to decide if~$G^\bullet$ satisfies~$\varphi^+$ when its quantified variables are interpreted in the corresponding sets~$R_i^\bullet$. Report the corresponding answer and if the answer is positive, then also return a tuple $\bar{v}$ of vertices, each picked from the $R_i^\bullet$ corresponding to the variables of $\bar{x}$. This is safe since $\bar{R}^\bullet\subseteq V(G)^r$.

\subsubsection{Proof of correctness}
We will show that the following invariants are maintained after each iteration. Assume that we are after the $i$th iteration. Then,
    \begin{enumerate}
        \item $G^\bullet$ excludes $H$ as a topological minor.

        \item $\mathcal{T}^\bullet$ is a tree decomposition with the following properties:
        \begin{enumerate}
            \item for every node $t$ of $\mathcal{T}^\bullet$ that is not yet processed, $G^\bullet[\mathsf{cone}(t)]$ admits a tight star decomposition of adhesion at most $d$;
            \item every node of $\mathcal{T}^\bullet$ that is not yet processed has the strong $(q,k)$-unbreakability property;
            \item for every node $t$ of $\mathcal{T}^\bullet$ that is already processed, we have $|\mathsf{cone}(t)|\le c$; and
        \end{enumerate}
        \item $\mathsf{type}_r(G,V(G)^r)=\mathsf{type}_r(G^\bullet,\bar{R}^\bullet).$
    \end{enumerate}
    We show the invariants by induction.
    We start by observing that in the beginning of each iteration, the underlying graph $G_z^\mathsf{Max}$ of $\mathbf{G}_z^\mathsf{Max}$ has treewidth at most $c+d$. This follows directly from the fact that every connected component of $G_z^\mathsf{Max}\setminus\mathsf{adh}(z)$ is a subgraph of $G[\mathsf{comp}(y)]$, for some child $y$ of $t$ with $\mathsf{adh}(y)\subseteq \mathsf{adh}(z)$ and that $|\mathsf{comp}(y)|\le c$; recall that because of the considered ordering, all children of $t$ have already been processed and therefore, by induction hypothesis, their cones have size at most $c$.
    Also, because of~\cref{lemma_compositionality} and the induction hypothesis,
    \[\mathsf{type}_r(G,V(G)^r)=\mathsf{type}_r(G',\bar{R}').\]
    Also, because of representatives being $h$-equivalent (because of~\Cref{obs_types_to_folios}) and~\cref{lem:excluded-top-minor}, $G'$ excludes $H$ as a topological minor.
    Because of representatives being $\delta$-equivalent and~\cref{obs_maintain_regularity},  every node of $\mathcal{T}^\bullet$ (that is not a child of $t$) has the strong $(q,k)$-unbreakability property, which shows invariant (2b). 
    We next show that for every node $t'$ of $\mathcal{T}^\bullet$ (that is not a child of $t$), $G_t$ admits a tight star decomposition of adhesion at most $d$. This will show invariant (2a) and that the requirements of~\cref{lemma_gadget} are satisfied.
    We fix a node $t'$ of $\mathcal{T}^\bullet$ (that is not a child of $t$).
    We consider the collection of connected components of $G^\bullet[\mathsf{comp}(t)]$.
    We say that two such components are equivalent if they are adjacent to exactly the same vertices in $\mathsf{bag}(t)$.
    We consider a partition of such components into equivalence classes with respect to this equivalence relation.
    This partition is used to define a star decomposition $\mathcal{S}=(S,\mathsf{bag}_{\mathcal{S}})$ as follows.
    Consider a star~$S$ with center $s$ and one leaf for each equivalence class; we denote by $\mathcal{C}(\ell)$ the equivalence class corresponding to the leaf $\ell$.
    We also define the mapping $\mathsf{bag}_\mathcal{S}$ that maps every vertex of $S$ to a subset of $G_t'$ in the following way:
    \begin{itemize}
        \item $\mathsf{bag}_\mathcal{S}(s):=\mathsf{bag}(t)$ and
        \item for every leaf $\ell$ of $S$,
        \[\mathsf{bag}_\mathcal{S}(\ell):=\bigcup_{C\in \mathcal{C}(\ell)}V(C)\cup N(C).\]
    \end{itemize}
    It is easy to see that $\mathcal{S}=(S,\mathsf{bag}_\mathcal{S})$ is a star decomposition of adhesion at most $d$, since for every leaf $\ell$ of $S$, $\mathsf{bag}_\mathcal{S}(\ell)\cap \mathsf{bag}_\mathcal{S}(s)$ is a subset of the adhesion (in $\mathcal{T}^\bullet$) of some node~$y$ of $\mathcal{T}^\bullet$. Also, by construction, $\mathcal{S}$ is tight.

    In order to show that the requirements of~\cref{lemma_gadget} are satisfied, we further observe that if $t$ is the currently processed node, the star decomposition obtained in the aforementioned way also has the following properties.
    \begin{itemize}
        \item For every leaf $\ell$ of $S$, $|\mathsf{bag}_\mathcal{S}(\ell)|\le c$. Indeed, since in $G'$, for every leaf $\ell$ of $S$, $\mathsf{bag}(\ell)\subseteq H_z$, for some $z\in\mathsf{MaxAdh}(t)$ and $|H_z|\le c$, we have $|\mathsf{bag}_\mathcal{S}(\ell)|\le c$.
        \item $\mathsf{bag}_\mathcal{S}(s)$ is $(q,k)$-unbreakable in $G_t'$; this follows directly from the fact that every node of $\mathcal{T}^\bullet$ (that is not a child of $t$) has the strong $(q,k)$-unbreakability property in~$G'$.
    \end{itemize}
    Also, since $G'$ excludes $H$ as a topological minor, the same holds for its subgraph $G_t'$. 

    Therefore, the requirements of~\cref{lemma_gadget} are satisfied and its application correctly gives a $(\delta,r)$-representative $(\mathbf{G}^\star,\bar{R}^\star)$ of $(\mathbf{G}_t',\bar{R}')$. Because of~\cref{lemma_compositionality}, $(\mathbf{C},\bar{Q})\oplus(\mathbf{G}^\star,\bar{R}^\star)$ and $(G',\bar{R}')$ have the same rank-$r$ annotated type, which shows invariant~(3). 
    Invariant~(1) follows from~\cref{lem:excluded-top-minor}.

\subsubsection{Running time}
The algorithm of~\cref{lemma_rep_tw} is applied in total a linear (in $n$) number of times. This algorithm is always invoked on graphs of treewidth at most $c+d$, and therefore the running time of each application is $\mathcal{O}_{|\Sigma|,b,r,c+d}(n) = \mathcal{O}_{|\Sigma|,b,r}(n)$. As for the algorithm of~\cref{lemma_gadget}, it runs in time $\mathcal{O}_{|H|,|\Sigma|,b,r,q,d,c}(|\mathsf{cone}(t)|^3)$. Since for every child $z$ of $t$, we have that $|\mathsf{cone}(z)|\le c$, we have that \[|\mathsf{cone}(t)|\le |\mathsf{mrg}(t)|+ c\cdot |\{z\in V(T)|\mid z\text{ is a child of $t$ in $T$}\}|.\]
Since the margins of the nodes in the initial tree decomposition $\mathcal{T}$ partition the vertex set of the input graph~$G$, we have that \[\sum_{t\in V(T)}|\mathsf{cone}(t)|\le n+c\cdot 2n = \mathcal{O}_{|\Sigma|,r}(n).\]
Therefore, the total running time of all calls to the algorithm of~\cref{lemma_gadget} is $\mathcal{O}_{|H|,|\Sigma|,r}(n^3)$.
This implies that the total running time of the iteration step is $\mathcal{O}_{|H|,|\Sigma|,r}(n^3)$. In the final step, the running time is $\mathcal{O}_{|H|,|\Sigma|,r}(n)$.

\section{Conclusion}
\label{sec:conclusion}

In this work, we have fully classified the subgraph-closed classes admitting efficient encoding of topological minors on which model checking for $\FODP$ is fixed-parameter tractable. A natural next question is to study the model checking problem also for dense graph classes that are not necessarily closed under taking subgraphs. 

Another interesting question is the following. In \cite{PilipczukSSTV22} the authors considered a framework where, after a polynomial time preprocessing, queries of separator logic can be answered in constant time. Is the same true for disjoint-paths logic? The most basic question is whether we can answer disjoint-paths queries in constant time (or even linear time) after preprocessing. 
Even though we failed to implement the framework for model checking, it may be the case that we can extend the framework for query answering after preprocessing, since we can incorporate the data structure computed by dynamic programming. 
Nevertheless, at this point, there are more difficulties because we do not know how to answer disjoint-paths queries in constant time in minor-closed classes. 
It seems plausible that we can lift results for bounded genus graphs to almost embeddable graphs and use the structure theorem to improve the running time for classes with excluded minors to linear. Then using our methods we would be able to improve it for all graphs to linear after preprocessing. Based on the nature of the irrelevant vertex technique, it seems unlikely that we can improve the query time to constant.

The last open question that we want to mention is whether, and how, $\FODP$ can be replaced by some more expressive fragment of \textsf{CMSO} in order to get a result similar to~\cref{main_theorem}.
As we already mentioned in the introduction, \cite{SauST25} introduced 
a fragment of \textsf{CMSO}, namely $\textsf{CMSO/tw+dp}$ where quantification is restricted to sets of bounded bidimensionality and it also permits the use of disjoint paths queries. It was proved in~\cite{SauST25} that model checking for $\textsf{CMSO/tw+dp}$ can be done in quadratic time in non-trivial  minor-closed graph classes.
Is there an adequate fragment of \textsf{CMSO}, similar to the one of \cite{SauST25}, that permits model checking on topological-minor-free graphs? An answer to this question is given in~\cite{Sau26}.

\vspace{-3mm}
\enlargethispage{\baselineskip}

\end{document}